\let\oldnl\nl
\newcommand{\nonl}{\renewcommand{\nl}{\let\nl\oldnl}}
\definecolor{mygreen}{RGB}{10,110,230}
\definecolor{myred}{RGB}{10,110,230}
\renewcommand{\epsilon}{\varepsilon}
\DeclareMathOperator{\E}{\ensuremath{\normalfont \textbf{E}}}
\newcommand{\hiddencomment}[1]{}
\newcommand{\mc}[1]{\ensuremath{\mathcal{#1}}}
\newcommand{\ORS}[0]{\ensuremath{\mathsf{ORS}}}
\newcommand{\RS}[0]{\ensuremath{\mathsf{RS}}}
\newcommand{\RuzsaSzemeredi}[0]{Ruzsa-Szemer\'edi}
\newcommand{\Gadd}[0]{\ensuremath{G_{\text{\normalfont add}}}}
\newcommand{\Gdel}[0]{\ensuremath{G_{\text{\normalfont del}}}}
\newcommand{\Hcert}[0]{\ensuremath{H_{\text{\normalfont cert}}}}
\newcommand{\cupdates}[0]{\ensuremath{c_{\text{\normalfont updates}}}}
\DeclareMathOperator*{\argmin}{arg\,min}
\DeclareMathOperator{\poly}{poly}
\crefname{lemma}{Lemma}{Lemmas}
\crefname{theorem}{Theorem}{Theorems}
\crefname{property}{Property}{Properties}
\crefname{claim}{Claim}{Claims}
\crefname{definition}{Definition}{Definitions}
\crefname{observation}{Observation}{Observations}
\crefname{proposition}{Proposition}{Propositions}
\crefname{assumption}{Assumption}{Assumptions}
\crefname{line}{Line}{Lines}
\crefname{figure}{Figure}{Figures}
\crefname{conj}{Conjecture}{Conjectures}
\crefname{problem}{Problem}{Problems}
\crefname{equation}{}{}
\crefname{section}{Section}{Sections}
\crefname{appendix}{Appendix}{Appendices}
\crefname{algCounter}{Algorithm}{Algorithms}
\Crefname{algCounter}{Algorithm}{Algorithms}
\newtheorem{theorem}{Theorem}
\newtheorem{lemma}{Lemma}
\newtheorem{proposition}[lemma]{Proposition}
\newtheorem{definition}[lemma]{Definition}
\newtheorem{claim}[lemma]{Claim}
\newtheorem{observation}[lemma]{Observation}
\newtheorem{open}{Open Problem}
\definecolor{mylightgray}{RGB}{240,240,240}
\algnewcommand{\IIf}[2]{\textbf{if} #1 \textbf{then} #2}
\algnewcommand{\EndIIf}{\unskip\ \algorithmicend\ \algorithmicif}
\algnewcommand{\IElse}[1]{\textbf{else} #1}
\newenvironment{graytbox}{
\par\addvspace{0.1cm}
\begin{tcolorbox}[width=\textwidth,
                  boxsep=5pt,
                  left=1pt,
                  right=1pt,
                  top=2pt,
                  bottom=2pt,
                  boxrule=1pt,
                  arc=0pt,
                  colback=mylightgray,
                  colframe=black,
                  ]
}{
\end{tcolorbox}
}
\newenvironment{whitetbox}{
\par\addvspace{0.1cm}
\begin{tcolorbox}[width=\textwidth,
                  boxsep=5pt,
                  left=1pt,
                  right=1pt,
                  top=2pt,
                  bottom=2pt,
                  boxrule=1pt,
                  arc=0pt,
                  colframe=black,
                  colback=white
                  ]
}{
\end{tcolorbox}
}
\newcounter{algCounter}
\newenvironment{algenv}[2]{
    \begin{whitetbox}
        \refstepcounter{algCounter}
        \textbf{Algorithm~\thealgCounter:} #1
        \label{#2}
        
        \vspace{-0.2cm}
        \noindent\rule{\linewidth}{1pt}
        \begin{algorithm}[H]
}{
        \end{algorithm}
    \end{whitetbox}
}
\renewcommand{\paragraph}{%
  \@startsection{paragraph}{4}%
  {\z@}{10pt}{-1em}%
  {\normalfont\normalsize\bfseries}%
}
\title{Fully Dynamic Matching and Ordered Ruzsa-Szemer\'edi Graphs}
\author{
Soheil Behnezhad\\{\em Northeastern University} \and 
Alma Ghafari\\{\em Northeastern University} \and
}
\date{}
\begin{document}

\maketitle

\thispagestyle{empty}

\begin{abstract}
{\setlength{\parskip}{0.2cm}

We study the fully dynamic maximum matching problem. In this problem, the goal is to efficiently maintain an approximate maximum matching of a graph that is subject to edge insertions and deletions. Our focus is particularly on algorithms that maintain the edges of a $(1-\epsilon)$-approximate maximum matching for an arbitrarily small constant $\epsilon > 0$. Until recently, the fastest known algorithm for this problem required $\Theta(n)$ time per update where $n$ is the number of vertices. This bound was slightly improved to $n/(\log^* n)^{\Omega(1)}$ by Assadi, Behnezhad, Khanna, and Li [STOC'23] and very recently to $n/2^{\Omega(\sqrt{\log n})}$ by Liu [FOCS'24]. Whether this can be improved to $n^{1-\Omega(1)}$ remains a major open problem.

In this paper, we introduce {\em Ordered Ruzsa-Szemerédi (ORS)} graphs (a generalization of Ruzsa-Szemerédi graphs) and show that the complexity of dynamic matching is closely tied to them. For $\delta > 0$, define $\ORS(\delta n)$ to be the maximum number of matchings $M_1, \ldots, M_t$, each of size $\delta n$, that one can pack in an $n$-vertex graph such that each matching $M_i$ is an {\em induced matching} in subgraph $M_1 \cup \ldots \cup M_{i}$. We show that there is a randomized algorithm that maintains a $(1-\epsilon)$-approximate maximum matching of a fully dynamic graph in
$$
    \widetilde{O}\left( \sqrt{n^{1+\epsilon} \cdot \ORS(\Theta_\epsilon(n))} \right)
$$
amortized update-time. 

While the value of $\ORS(\Theta(n))$ remains unknown and is only upper bounded by $n^{1-o(1)}$, the densest construction known from more than two decades ago only achieves $\ORS(\Theta(n)) \geq n^{1/\Theta(\log \log n)} = n^{o(1)}$ [Fischer et al. STOC'02]. If this is close to the right bound, then our algorithm achieves an update-time of $\sqrt{n^{1+O(\epsilon)}}$, resolving the aforementioned longstanding open problem in dynamic algorithms in a strong sense.

}
\end{abstract}

{
\clearpage
\hypersetup{hidelinks}
\vspace{1cm}
\renewcommand{\baselinestretch}{0.1}
\setcounter{tocdepth}{2}
\tableofcontents{}
\thispagestyle{empty}
\clearpage
}

\setcounter{page}{1}

\clearpage

\section{Introduction}

We study {\em dynamic} algorithms for the {\em maximum matching} problem, a cornerstone of combinatorial optimization. Given a graph $G=(V, E)$ a matching $M \subseteq E$ is a collection of vertex-disjoint edges. A maximum matching is a matching of largest possible size in $G$. We study the maximum matching problem in fully dynamic graphs. In this problem, the input graph $G$ changes over time via a sequence of edge insertions and deletions. The goal is to maintain an (approximate) maximum matching of $G$ at all times, without spending too much time after each update.

\paragraph{Background on Dynamic Matching:} The dynamic matching problem has received a lot of attention over the last two decades \cite{OnakR10,BaswanaGS11,BaswanaGS18,NeimanS-STOC13,GuptaPeng-FOCS13,BhattacharyaHI-SiamJC18,BernsteinS-ICALP15,BernsteinSteinSODA16,BhattacharyaHN-SODA17,BhattacharyaHN-STOC16,Solomon-FOCS16,CharikarS18-ICALP,ArarCCSW-ICALP18,BernsteinFH-SODA19,BehnezhadDHSS-FOCS19,BehnezhadLM-SODA20,Wajc-STOC20,BernsteinDL-STOC21,BhattacharyaK21-ICALP21,RoghaniSW-ITCS22,Kiss-ITCS22,GrandoniSSU-SOSA22,BehnezhadK-SODA22,BhattacharyaKSW-SODA23,Behnezhad-SODA23,AzarmehrBR-SODA24,BhattacharyaKS-FOCS23}. There is a relatively simple algorithm that maintains a $(1-\epsilon)$-approximate maximum matching, for any fixed $\epsilon > 0$, in just $O(n)$ time per update (see e.g. \cite{GuptaPeng-FOCS13}). The update-time can be significantly improved if we worsen the approximation. For instance, a 1/2-approximation can be maintained in $\poly\log n$ time per update \cite{Solomon16,BaswanaGS11,BehnezhadDHSS-FOCS19}, or an (almost) 2/3-approximation can be maintained in $O(\sqrt{n})$ time \cite{BernsteinSteinSODA16}. Nonetheless, when it comes to algorithms with approximation ratio better than $2/3$, the update-time stays close to $n$. A slightly sublinear algorithm was proposed by \citet*{AssadiBKL23} which runs in $n/(\log^* n)^{\Omega(1)}$ time per update and maintains a $(1-o(1))$-approximation. In a very recent paper, \citet*{Yang24} improved this to $n/2^{\Omega(\sqrt{\log n})}$ via a nice connection to  algorithms for the online matrix-vector multiplication (OMv) problem. Despite this progress, the following remains a major open problem:

\begin{open}\label{open:dynamic1+eps}
    Is it possible to maintain a $(1-\epsilon)$-approximate maximum matching in a fully dynamic graph, for any fixed $\epsilon > 0$, in $n^{1-\Omega(1)}$ update-time?
\end{open}

We note that there is an orthogonal line of work on fully dynamic algorithms that instead of maintaining the edges of the matching, maintains only its size \cite{Behnezhad-SODA23,BhattacharyaKSW-SODA23,BhattacharyaKS-FOCS23}. For this easier version of the problem, \citet*{BhattacharyaKS-FOCS23} positively resolved the open problem above. However, their algorithm crucially relies on only estimating the size and does not work for the problem of maintaining the edges of the matching. We refer interested readers to \cite{Behnezhad-SODA23} where the difference between the two versions of the problem is mentioned.

\subsection{Our Contributions}

\paragraph{Contribution 1: Dynamic Matching.} In this paper, we make progress towards \Cref{open:dynamic1+eps} by presenting a new algorithm whose update time depends on the density of a certain class of graphs that we call Ordered \RuzsaSzemeredi{} (ORS) graphs, a generalization of the well-known \RuzsaSzemeredi{} (RS) graphs.

Let us start by defining RS graphs.

\begin{definition}[Ruzsa-Szemer\'edi Graphs]\label{def:RS}
    An $n$-vertex graph $G=(V, E)$ is an $\RS_n(r, t)$ graph if its edge-set $E$ can be decomposed into $t$ edge-disjoint induced matchings each of size $r$. We use $\RS_n(r)$ to denote the maximum $t$ for which $\RS_n(r, t)$ graphs exists.
\end{definition}

Instead of each matching being an induced matching in the whole graph, the edges of an ORS graph should be decomposed into an ordered list of matchings such that each matching is induced only with respect to the previous matchings in the ordering. The following formalizes this.

\begin{definition}[Ordered Ruzsa-Szemer\'edi Graphs]\label{def:ORS}
    An $n$-vertex graph $G=(V, E)$ is an $\ORS_n(r, t)$ graph if its edge-set $E$ can be decomposed into an ordered list of $t$ edge-disjoint matchings $M_1, \ldots, M_t$ each of size $r$ such that for every $i \in [t]$, matching $M_i$ is an induced matching in $M_1 \cup \ldots \cup M_i$. We use $\ORS_n(r)$ to denote the maximum $t$ for which $\ORS_n(r, t)$ graphs exists.
\end{definition}

Note that every $\RS_n(r, t)$ graph is an $\ORS_n(r, t)$ graph but the reverse is not necessarily true.

Our main result can now be stated as follows:

\begin{graytbox}

\begin{restatable}{thm}{fullydynamic}
    
\begin{theorem}\label{thm:fully-dynamic}
    Let $\epsilon > 0$ be fixed. There is a fully dynamic algorithm that maintains the edges of a $(1-\epsilon)$-approximate maximum matching in $O\left(\sqrt{n^{1+\epsilon} \cdot \ORS_n(\Theta_\epsilon(n))} \poly(\log n) \right)$ amortized update-time. The algorithm is randomized but works against adaptive adversaries.
\end{theorem}
\end{restatable}
\end{graytbox}

 To understand the update-time in \cref{thm:fully-dynamic}, we need to understand the density of ORS graphs for linear size matchings. Let $0 < c < 1/5$ be a constant. Since $\ORS_n(cn) \geq \RS_n(cn)$ as every RS graph is also an ORS graph with the same parameters, it is natural to first look into the more well-studied case of RS graphs. In other words, how dense can RS graphs with linear size matchings be? We note that this question has been of interest to various communities from property testing \cite{FischerLNRRS02} to  streaming algorithms \cite{GoelKK12,AssadiBKL23,Assadi22} to additive combinatorics \cite{fox2015graphs}. Despite this, the value of $\RS_n(cn)$ remains widely unknown. 
 
 The best lower bound on $\RS_n(cn)$--- i.e., the densest known construction---is that of \cite{FischerLNRRS02} from more than two decades ago which shows $\RS_n(cn) \geq n^{\Omega_c(1/\log \log n)} = n^{o(1)}$. This is indeed the densest known construction of ORS graphs we are aware of too. If this turns out to be the right bound, \cref{thm:fully-dynamic} implies a $(1-\epsilon)$-approximation in $\widetilde{O}(n^{1/2+\epsilon})$ time, an almost quadratic improvement over prior near-linear in $n$ algorithms of \cite{AssadiBKL23,Yang24}. In fact, we note that so long as $\ORS_n(cn)$ is moderately smaller than $n$ (say $\ORS_n(\Theta_\epsilon(n)) \ll n^{1-\Omega(1)}$) \cref{thm:fully-dynamic} still implies a truly sublinear in $n$ update-time algorithm, positively resolving \Cref{open:dynamic1+eps}.

Finally, we note that there is a long body of work on proving conditional lower bounds for dynamic problems. For instance, the OMv conjecture can be used to prove that  maintaining an exact maximum matching requires near-linear in $n$ update-time \cite{HenzingerKNS15}. Adapting these lower bounds to the $(1-\epsilon)$-approximate maximum matching problem has remained open since then. Our \cref{thm:fully-dynamic} implies that proving such lower bounds either requires a strong lower bound of near-linear on $\ORS_n(\Theta(n))$, or requires a conjecture that implies this. 

\vspace{-0.2cm}

\paragraph{Contribution 2: Upper Bounding ORS.} Unfortunately there is a huge gap between existing lower and upper bounds for $\RS_n(cn)$ (and as a result also for $\ORS_n(cn)$). The best known upper bound on $\RS_n(c n)$ for linear size matchings follows from the improved triangle-removal lemma of \citet{fox2011new} which implies  $\RS_n(cn) \leq n/\log^{(\ell)} n$ for $\ell = O(\log (1/c))$ where $\log^{(x)}$ is the iterated log function. We note that this result is implicit in \cite{fox2011new} and was mentioned in the paper of \cite{fox2015graphs}. To our knowledge, this upper bound does not carry over to ORS graphs (we briefly discuss this at the beginning of \cref{sec:ub}). Our second result is a similar upper bound for ORS albeit with a worse dependence on constant $c$.

\begin{graytbox}
\begin{theorem}\label{thm:ORS-ub}
    For any $c > 0$, it holds that $\ORS_n(cn) = O(n / \log^{(\ell)} n)$ for some $\ell = \poly(1/c)$.
\end{theorem}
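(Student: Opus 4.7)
The plan is to adapt the classical triangle-removal proof of the RS upper bound to the ORS setting. Given an $\ORS_n(cn, t)$ graph $G$ with matchings $M_1, \ldots, M_t$, I would build the standard tripartite auxiliary graph $H$ on parts $V_1 = V_2 = V(G)$ and $V_3 = \{x_1, \ldots, x_t\}$, including an edge $(u,v) \in V_1 \times V_2$ for each $(u,v) \in E(G)$ and edges $(w, x_i)$ in $V_1 \times V_3$ and $V_2 \times V_3$ for each $w \in V(M_i)$. Each $(u,v) \in M_i$ then yields a \emph{built-in} triangle $(u,v,x_i)$, and these $tcn$ built-in triangles are pairwise edge-disjoint; destroying all of them requires at least $tcn$ edge removals in $H$. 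In the RS setting these are the only triangles in $H$, and Fox's improved triangle-removal lemma applied to $H$ yields the claimed RS-type bound directly.

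For ORS, $H$ additionally contains \emph{spurious} triangles of the form $(u, v, x_j)$ where $(u,v) \in M_i$ with $i>j$ and both $u,v\in V(M_j)$; the ORS property forces $i>j$ (because $M_j$ is induced in $M_1\cup\cdots\cup M_j$) but allows any such $i$. To handle them I would first prove the following substructure lemma: for any index $j$, the restrictions of $M_{j+1}, \ldots, M_t$ to $V(M_j)$, together with $M_j$ itself, form an ORS sub-structure on the $2cn$ vertices of $V(M_j)$. This is because the induced-in-prefix property of $M_i$ in the original graph transfers cleanly to its restriction to $V(M_j)$, and earlier matchings $M_k$ with $k<j$ contribute no edges inside $V(M_j)$ other than $M_j$ itself. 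From this I get a dichotomy: either the total number of spurious triangles in $H$ is at most a constant multiple of the built-in count $tcn$, in which case Fox's triangle-removal lemma applied to $H$ gives an RS-type bound $t \leq O(n/\log^{(\ell_0)} n)$ with $\ell_0 = O(\log(1/c))$; or, by an averaging argument, some index $j$ admits an $\ORS_{2cn}(c' \cdot 2cn, t')$ sub-instance with strictly larger matching fraction $c' > c$, and I recurse on that sub-instance.

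The recursion terminates once $c$ reaches a fixed constant, at which point the claim is trivial. The main obstacle will be making this recursion quantitative. I need to choose the density threshold in the dichotomy so that (a) each recursion level increases the matching fraction by at least $c^{O(1)}$, forcing the recursion to terminate in $\poly(1/c)$ levels; (b) the matching count $t'$ of the sub-instance satisfies $t \lesssim t'$ up to a controlled polynomial factor, so that bounding $t'$ recursively still bounds the original $t$; and (c) the shrinking vertex count $n \to 2cn$ per level stays large enough for the local RS-type bound to remain meaningful. If these parameters can be balanced, the claimed tower height $\ell = \poly(1/c)$ emerges naturally from the $\poly(1/c)$ recursion depth, with each level locally invoking Fox's triangle-removal lemma at its own scale and contributing a bounded additive amount to the final tower.
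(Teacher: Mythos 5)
Your recursion skeleton (boost the matching fraction $c$ by passing to a denser sub-instance, terminate when $c$ exceeds a constant via the trivial bound for very large matchings) matches the paper's high-level strategy, and your substructure lemma is correct: inducedness-in-prefix does restrict cleanly to $V(M_j)$, and earlier matchings contribute no edges inside $V(M_j)$. But the dichotomy at the heart of your argument does not close quantitatively, and this is precisely the obstruction that led the paper to abandon the triangle-removal route entirely (see the discussion at the start of its Section 5). The problem is a mismatch of scales between your two branches. For Branch 2 to produce an ORS sub-instance on the $2cn$ vertices of some $V(M_j)$ with matching fraction $c' > c$ and with $t'$ comparable to $t$, you need at least $t'$ of the restricted matchings $M_i \cap V(M_j)^2$ to individually have size at least $2c'cn$; since each restricted matching has size at most $cn$, this forces $s_j = \sum_{i>j}|M_i\cap V(M_j)^2| \gtrsim c\, t\cdot cn$, and hence (by averaging over $j$) a dichotomy threshold of order $\sum_j s_j \gtrsim c^2 n t^2$, i.e., a factor of $\Theta(ct)$ times the built-in count $tcn$, not a constant multiple. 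But then Branch 1 must handle up to $\Theta(c^2 n t^2)$ spurious triangles. Since you still only know that $tcn$ edge removals are needed (only the built-in triangles are guaranteed edge-disjoint), the removal-lemma contrapositive yields the constraint $\delta(\epsilon) N^3 < \Theta(c^2 n t^2)$ with $\epsilon \approx tc/n$; plugging in Fox's tower-type lower bound on $\delta(\epsilon)$, this inequality is satisfied for all $t \gtrsim \sqrt{n}$, so no contradiction and no nontrivial bound on $t$ is obtained. If instead you set the threshold at $O(1)\cdot tcn$ so that Branch 1 survives, then Branch 2 only gives you $s_j \gtrsim cn$ for some $j$, which after the averaging over the $t$ restricted matchings does not guarantee even a single restricted matching of size $\Omega(c\cdot 2cn)$, let alone $t' \approx t$ of them.

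The paper circumvents this with a different, removal-lemma-free mechanism: it splits the matchings into a first half $\mc{M}$ and second half $\mc{M}'$, and uses the observation that if $v$ is matched by $M'\in\mc{M}'$ then deleting all of $v$'s $\mc{M}$-neighbors cannot touch $V(M')$. Choosing pivots greedily by degree in $\mc{M}$ and deleting their neighborhoods shrinks the vertex set by a $(1-\Omega(\delta c^2))$ factor while keeping $t/2^{O(n/ct)}$ matchings fully intact (grouped by which pivots they match), which boosts $c$ multiplicatively; a separate low-degree-pruning case handles the situation where the degree hypothesis fails. You may want to look for a mechanism of this flavor rather than trying to repair the triangle count.
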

\end{graytbox}

Since every RS graph is also an ORS graph with the same parameters, \cref{thm:ORS-ub} immediately implies the same upper bound for $\RS_n(cn)$. Note that this implication is not a new result, but the proof is different from that of \cite{fox2011new} and is closer to the arguments in \cite{fox2015graphs}.

\subsection{Perspective: ORS vs RS}

Summarizing the above-mentioned bounds, we have
\begin{equation*}
n^{\Omega_c(1/\log \log n)} \stackrel{\text{\cite{FischerLNRRS02}}}{\leq} \RS_n(cn) \leq \ORS_n(cn) \stackrel{\text{\cref{thm:ORS-ub}}}{\leq} O(n / \log^{(\poly(1/c))} n).
\end{equation*}

While the value of $\RS_n(cn)$ remains widely unknown, one might argue that $\RS_n(cn) = n^{o(1)}$ is a plausible outcome, given that the construction of \cite{FischerLNRRS02} has resisted any improvements for over two decades despite significant interest. But should we believe that $\ORS_n(cn)$ is also small in this case?  Unfortunately the authors could not prove any formal relation between the densities of RS and ORS graphs beyond the upper bound of \cref{thm:ORS-ub}. In particular, we believe the following is an extremely interesting question for future work:

\begin{open}\label{open:ORSpoly}
    Is it true that for fixed $\epsilon > 0$, $$\ORS_n(\epsilon n) \leq \poly(\RS_n(\Theta_\epsilon(n)))?$$
\end{open}

In the event that the answer to \Cref{open:ORSpoly} is positive and $\RS_n(cn) = n^{o(1)}$ for any fixed $c > 0$,  we also get that $\ORS_n(cn) = n^{o(1)}$. Therefore \cref{thm:fully-dynamic} would imply an $n^{1/2+O(\epsilon)}$ time algorithm in this case.

In the event that the answer to \Cref{open:ORSpoly} is negative, one might wonder whether we can improve \cref{thm:fully-dynamic} by parametrizing it based on RS instead of ORS. Put differently, suppose that $\ORS_n(cn) = n^{1-o(1)}$ and $\RS_n(cn) = n^{o(1)}$. Can we somehow utilize the sparsity of RS graphs (instead of ORS graphs) in this case to improve existing dynamic matching algorithms? We start \cref{sec:techniques} by providing an input construction which informally shows ORS is the right parameter for \cref{thm:fully-dynamic} even if RS graphs turn out to be much sparser.

\subsection{Connections and Recent Developments}

\paragraph{Connection to Sublinear Time Algorithms:} Our algorithm crucially relies on sublinear time algorithms; particularly the algorithm of the first author in \cite{behnezhad2021} for estimating the size of maximum matching. While sublinear-time algorithms have also played a crucial role in the dynamic matching algorithms of \cite{Behnezhad-SODA23,BhattacharyaKSW-SODA23,BhattacharyaKS-FOCS23}, these algorithms maintain only the size of the matching as opposed to its edges. Our algorithm, on the other hand, maintains the matching explicitly yet relies on matching size estimators crucially. On a high level, our algorithm can tolerate the time needed to {\em find} a large matching in a specific induced subgraph so long as a large matching is guaranteed to exist there. We use the fast sublinear-time matching algorithm of \cite{behnezhad2021} to first verify existence of a large matching, then spend the time needed if this matching is large.

\paragraph{Connection to a Lower Bound of \citet*{Yang24}:} In his very recent work, \citet*{Yang24} showed that under the approximate OMv conjecture, there is no algorithm that maintains a $(1-\epsilon)$-approximate maximum matching in $n^{1-\Omega(1)} \poly(1/\epsilon)$ update-time. This might, at the first glance, appear to contradict \cref{thm:fully-dynamic} which runs in $n^{1-\Omega(1)}$ time if it so happens that $\ORS(\Theta(n))=n^{1-\Omega(1)}$. However, we note that there is no contradiction even if approximate OMv conjecture turns out to be true and $\ORS(\Theta(n))=n^{1-\Omega(1)}$ at the same time! The reason is that when parameter $\epsilon$ is sufficiently sub-constant, dense constructions of RS (and consequently ORS) with $\Theta(n^2)$ edges already exist \cite{AlonMS12}. So our \cref{thm:fully-dynamic} is only potentially useful for fixed (or mildly sub-constant) $\epsilon$. On the other hand, the reduction of \cite{Yang24} sets $\epsilon = n^{-\Omega(1)}$, thus only targets the sub-constant regime.

Interestingly, while the result of \cite{Yang24} removes the combinatorial structure of the dynamic matching problem and reduces it to a purely algebraic question for the sub-constant regime of $\epsilon$, our work shows that the fixed regime of $\epsilon$ is, in fact, a completely combinatorial question.

\paragraph{Recent Developments:} After the first version of this paper appeared online, the nice follow-up works of \citet*{AssadiKhanna-ORS-Arxiv} and \citet*{Kiss-ORS} improved our update-time from $O(\sqrt{n^{1+o(1)}\ORS(\Theta(n)})$ to $O(n^{o(1)}\ORS(\Theta(n)))$. Importantly, their algorithms run in $n^{o(1)}$ update-time under $\ORS(\Theta(n))=n^{o(1)}$, showing that ORS graphs fully characterize the hardness of the dynamic matching problem.

\section{Our Techniques}\label{sec:techniques}

In this section, we provide an informal overview of our algorithm for \cref{thm:fully-dynamic} as well as the upper bound of \cref{thm:ORS-ub}.

Before describing the intuition behind our algorithm of \cref{thm:fully-dynamic}, let us start with a sequence of updates that, in a sense, explains why existence of dense ORS graphs would make it challenging to maintain a $(1-\epsilon)$-approximate maximum matching in a fully dynamic setting.

\paragraph{Why ORS graphs are seemingly hard:} Consider a fully dynamic input graph $G=(V, E)$ that is composed of two types of vertices: the {\em ORS vertices} $V_{ORS} \subseteq V$ which is a subset of $n$ vertices, and the {\em singleton vertices} $V_S$ which is a subset of $(1-2\epsilon)n$ vertices. We start by inserting an ORS graph in the induced subgraph $G[V_{ORS}]$. Namely, take an $\ORS_n(\epsilon n, t)$ graph on $n$ vertices. We make the induced subgraph $G[V_{ORS}]$ isomorphic to this ORS graph by inserting its edges one by one to $G[V_{ORS}]$. Let $M_1, \ldots, M_t$ be the ordered induced matchings of $G[V_{ORS}]$ as defined in \cref{def:ORS}. Then the sequence of updates is as follows:
\begin{itemize}
    \item For $i = t$ to $1$:
\begin{itemize}
    \item Delete all existing edges of $V_S$.
    \item If $i \not= t$, delete the edges of $M_{i+1}$.
    \item Let $M_i$ be the {\em current} induced matching in $G[V_{ORS}]$.
    \item Insert a perfect matching between the $(1 - 2\epsilon) n$ vertices of $V_{ORS}$ left unmatched by $M_i$ and the $(1-2\epsilon)n$ vertices in $V_{S}$.
\end{itemize}
\end{itemize}

Take the graph after the iteration $i$ of the for loop. Note that there is a perfect matching in $G$: match all singleton vertices to the $V_{ORS}$ vertices not matched by $M_i$, and match the rest of the vertices in $V_{ORS}$ through $M_i$. Importantly, since all matchings $M_{i+1}, \ldots, M_t$ have already been deleted from the graph, $M_i$ must be an induced matching of the remaining graph $G$ (the other matchings $M_1, \ldots, M_{i-1}$ cannot have any edge with both endpoints matched by $M_i$ due to \cref{def:ORS}). Because of this, it can be confirmed that any $(1-\epsilon/2)$-approximate maximum matching of $G$ must include at least half of the edges of $M_i$. The naive algorithm for finding an edge of $M_i$ for some vertex $v$ would scan the neighbors of $v$, which could take $\Omega(t)$ time per vertex (as this is the degrees in the ORS graph) and thus $n t$ time in total after every iteration of the loop consisting of $n$ updates. Hence, the amortized update-time of this algorithm must be at least $\Omega(nt/n) = \Omega(\ORS_n(\epsilon n))$.

The input construction above implies that to maintain a $(1-\epsilon)$-approximation of maximum matching, either we have to find a way to identify induced matchings of an ORS graph fast (without scanning the neighbors of each vertex) which appears extremely challenging, or we have to parameterize our algorithm's update-time by $t$, the density of ORS graphs. We take the latter approach in this work.

\paragraph{Overview of our algorithm for \cref{thm:fully-dynamic}:} Let us for this informal overview of our algorithm assume that the maximum matching size is at least $\Omega(n)$. Having this assumption (which comes w.l.o.g. as stated by \cref{prop:multiplicative}) allows us to find the matching once, do nothing for the next $\epsilon n$ updates, and then repeat without hurting the size of the approximate matching that we find by more than a $1+O(\epsilon)$ factor.

As it is standard by now, to find a $(1-\epsilon)$-approximate matching it suffices to design an algorithm that given a subset $U \subseteq V$, finds a constant approximate maximum matching in $G[U]$. If this algorithm runs in $T$ time, we can find a $(1-\epsilon)$-approximate maximum matching of the whole graph also in $O_\epsilon(T)$ time. Amortized over $\epsilon n$ updates, this runs in $O(T/n)$ total time for constant $\epsilon > 0$. However, just like the challenging example discussed above, in case the maximum matching in the induced subgraph $G[U]$ is an induced matching, we do not know how to find a constant fraction of its edges without spending $\Omega(n^2)$ time. However, if we manage to bound the total number of such hard subsets $U$ for which we spend a lot of time, then we can bound the update-time of our algorithm. Intuitively, we would like to guarantee that if our algorithm takes $\Omega(n^2)$ time to solve an instance $G[U]$, then the maximum matching in $G[U]$ must be an induced matching of the graph $G$, and charge these heavy computations to ORS which provides an upper bound on the number of edge-disjoint such induced matchings. However, there are two main problems: (1) it may be that the maximum matching in $G[U]$ is not an induced matching of the graph, yet it is sparse enough that it is hard to find; (2) even if $G[U]$ forms an induced matching, we have to ensure that its edges do not belong to previous induced matchings that we have charged, as ORS only bounds the number of {\em edge-disjoint} ordered induced matchings.

For the first problem discussed above, we present an algorithm that runs in (essentially) $O(n^2/d)$ time to find the maximum matching in $G[U]$. Here $d$ is a parameter that depends on the structure of $G[U]$ that measures how easy it is to find an approximate maximum matching of $G[U]$. Intuitively, if the average degree within $G[U]$ is $d$, we can random sample pairs of vertices to add to the matching. If each vertex is adjacent to $d$ others, we only need $O(n/d)$ samples to match it and the algorithm runs in $O(n^2/d)$ time. Of course, this can take up to $\Omega(n^2)$ time if $G[U]$ is sparse -- e.g. when it is an induced matching. Then instead of charging a matching in $G[U]$ that is an induced matching, we charge this matching of average degree at most $d$ inside, which we call a {\em certificate}. Let $M_1, \ldots, M_t$ be the certificates that we charge and let $d_i$ be the average degree of the $i$-th matching and suppose that these matchings are edge-disjoint. We show in our update-time analysis that $\sum_{i=1}^t 1/d_i$ can be at most $\ORS(\Theta(n))$, and therefore the total time spent by our algorithm during a phase can be upper bounded by $n^2 \ORS(\Theta(n))$.

For the second problem, or in other words, to ensure that the certificate matchings that we charge are edge-disjoint, we maintain a set $\Hcert$ and add all edges of any matching that we charge to this set. Thereafter, before solving $G[U]$, we first go over the edges stored in this certificate set and see whether they can be used to find a large matching in $G[U]$. If they do, we do not run the random sampling algorithm discussed above. If not, the matching that we find must be edge-disjoint. We have to be careful that we do not make $\Hcert$ too dense though as we spend linear time in the size of $\Hcert$. Our final algorithm resets $\Hcert$ after a certain number of updates. 

We note that our informal discussion of this section hides many (important) details of the final algorithm that we formalize in \cref{sec:dynamic}.

\paragraph{Overview of our upper bound in \cref{thm:ORS-ub}:}

To obtain our upper bound of \cref{thm:ORS-ub}, we partition the matchings into two subsets, $\mc{M}$ and $\mc{M}'$, based on their order in the sequence. A key insight (a variant of which was used in the RS upper bound of \cite{fox2015graphs}) is the following: take a vertex $v$ and suppose that it is matched by some matching $M$ in $\mc{M}'$. If we remove all neighbors of vertex $v$ in $\mc{M}$, then it can be proved that no vertex of matching $M$ is removed because otherwise there must be an edge from $\mc{M}$ that matches two vertices of $M$, violating the inducedness property. Intuitively, this shows that if vertices have large degrees in $\mc{M}$, we can remove a relatively large number of vertices without hurting the matchings that include this vertex. To derive the upper bound, we carefully select a set of pivots based on the degrees in $\mc{M}$, and remove the neighbors of these pivots. We show that this reduces the number of vertices significantly enough, and keeps the size of a small (but sufficiently many) of the matchings unchanged. If the initial number of matchings is so large, we show that we can iteratively applying this procedure. Because the size of matchings do not change but the number of vertices drops, we get that the process should eventually stop. This implies the upper bound on the number of matchings in the starting graph.

\section{Preliminaries}

A {\em fully dynamic} graph $G=(V, E)$ is a graph defined on a fixed vertex set $V$ that is subject to edge insertions and deletions. We assume that each edge update is issued by an {\em adaptive adversary} that can see our algorithm's previous outputs and can accordingly decide on the next update. We use $\mu(G)$ to denote the maximum matching size in G. We say an algorithm has amortized update-time $U$ if the total time spent after $T$ updates is $U \cdot T$ for some sufficiently large $T = \poly(n)$. 

\paragraph{Tools from prior work:} Here we list some of the tools we use from prior work in our result.

The following proposition, implied by the streaming algorithm of \citet*{McGregor05} (see also \cite{BhattacharyaKS-FOCS23} for its dynamic adaptation), shows that to find a $(1-\epsilon)$-approximate matching, it suffices to solve a certain induced matching problem a constant number of times.

\begin{proposition}[Approximation Boosting Framework \cite{McGregor05,BhattacharyaKS-FOCS23}]\label{prop:boosting-apx}
    Let $G=(V, E)$ be any (possibly non-bipartite) $n$-vertex graph. Suppose that for any parameter $\delta \in (0, 1)$ we have an algorithm $\mc{A}(G, U, \delta)$ that provided any vertex subset $U \subseteq V$ with $\mu(G[U]) \geq \delta \cdot n$, finds a matching of size at least $\poly(\delta) \cdot \mu(G[U])$ in $G[U]$. Then for any $\epsilon \in (0, 1)$, there is an algorithm that finds a matching of size at least $\mu(G) - \epsilon n$ in $G$ by making $t = O_{\epsilon}(1)$ adaptive calls 
    $$
    \mc{A}(G, U_1, \delta_1), \ldots, \mc{A}(G, U_t, \delta_t)
    $$
    to algorithm $\mc{A}$. The value of $\delta_i$ in each of these calls is just a function of $\epsilon$, $\delta_i \leq \epsilon/2$, and preparing the vertex subsets $U_1, \ldots, U_t$ can be done in $\widetilde{O}_\epsilon(n)$ total time.
\end{proposition}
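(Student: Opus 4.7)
The plan is to follow the McGregor-style augmenting-path boosting scheme of~\cite{McGregor05,BhattacharyaKS-FOCS23}. I would maintain a matching $M \subseteq E$, initially empty, and in each iteration either certify $|M| \geq \mu(G) - \epsilon n$ or augment $M$ by an additive $\Omega_\epsilon(n)$. Since each successful iteration increases $|M|$ by $\Omega_\epsilon(n)$, only $O_\epsilon(1)$ iterations are ever needed. Each iteration will invoke $\mc A$ a further $O_\epsilon(1)$ times, giving the overall $t = O_\epsilon(1)$ bound.

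The heart of one iteration is to extract many vertex-disjoint augmenting paths of bounded length. By the classical Hopcroft--Karp argument, if $|M| < \mu(G) - \epsilon n$ then there are $\Omega(\epsilon n)$ vertex-disjoint augmenting paths of length at most $2k+1$ with $k = O(1/\epsilon)$. To funnel these paths through $\mc A$, I would use a random layering: independently label each free vertex as ``source'' ($S$) or ``sink'' ($T$) uniformly, and independently label each matched edge of $M$ with a uniform element of $\{1,\dots,k\}$. A fixed length-$(2k+1)$ augmenting path $P$ is \emph{consistent} (its endpoints fall in $S,T$ on the appropriate sides and its matched edges carry labels $1,\dots,k$ in order along $P$) with probability $\Omega(k^{-k}) = \Omega_\epsilon(1)$, so $\Omega_\epsilon(n)$ consistent paths exist in expectation. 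To collect them I proceed in $k{+}1$ rounds: in round $j \leq k$, starting from a current frontier of vertices, form the vertex set $U_j$ consisting of the frontier together with both endpoints of every label-$j$ matched edge, and call $\mc A(G, U_j, \delta_j)$ with $\delta_j = \poly(\epsilon) \leq \epsilon/2$; the returned matching is then extended across the label-$j$ matched edges to form the next frontier. A final round matches the last frontier into $T$, and tracing back yields $\Omega_\epsilon(n)$ vertex-disjoint augmenting paths, which are applied to $M$.

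The main obstacle is the bookkeeping needed to guarantee that at every invocation of $\mc A$ the subgraph $G[U_j]$ has $\mu(G[U_j]) \geq \delta_j n$, so that the hypothesis of $\mc A$ is met and the returned matching has size $\Omega_\epsilon(n)$. This reduces to showing that a constant fraction of the $\Omega_\epsilon(n)$ consistent augmenting paths remain ``alive'' after each round: conditioned on a consistent path's first $j$ steps having been used by the matchings returned so far, its $(j{+}1)$-st edge lies in $G[U_{j+1}]$, and $\mc A$'s constant-factor guarantee ensures we lose at most a $\poly(\epsilon)$ fraction per round. The preprocessing per iteration -- identifying free vertices, sampling labels, and assembling each $U_j$ from $M$ and the current frontier -- is clearly $\widetilde O(n)$, so across the $O_\epsilon(1)$ iterations the total preparation cost is $\widetilde O_\epsilon(n)$ as claimed. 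Repeating the random labeling $O_\epsilon(1)$ times when a round underperforms boosts the success probability, making the reduction Las Vegas without affecting the count $t = O_\epsilon(1)$.
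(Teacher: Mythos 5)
The paper does not actually prove this proposition; it imports it as a black-box tool from \cite{McGregor05,BhattacharyaKS-FOCS23}, so there is no in-paper argument to compare yours against line by line. Your high-level plan --- iterate until $|M| \geq \mu(G) - \epsilon n$, invoke Hopcroft--Karp to get $\Omega(\epsilon n)$ vertex-disjoint augmenting paths of length $O(1/\epsilon)$, randomly label free vertices and matched edges so that each fixed path is ``consistent'' with probability $\Omega_\epsilon(1)$, and realize each layer transition as an oracle call --- is indeed the McGregor-style scheme underlying the cited results, and your accounting of $t = O_\epsilon(1)$ calls and $\widetilde{O}_\epsilon(n)$ preparation time is consistent with the statement.

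However, there is a genuine gap at exactly the step you flag as ``the main obstacle,'' and it is not just bookkeeping. The oracle $\mc{A}(G, U_{j+1}, \delta_{j+1})$ returns an \emph{arbitrary} matching of size $\poly(\delta_{j+1}) \cdot \mu(G[U_{j+1}])$ in the induced subgraph $G[U_{j+1}]$. Since $U_{j+1}$ contains both the current frontier and both endpoints of every label-$(j{+}1)$ matched edge, the induced subgraph contains the matched edges of $M$ themselves, edges internal to the frontier, and edges internal to the layer; a large matching made entirely of such edges satisfies the oracle's guarantee yet extends no augmenting path. Hence the claim that ``$\mc{A}$'s constant-factor guarantee ensures we lose at most a $\poly(\epsilon)$ fraction per round'' does not follow, and a single forward pass of $k+1$ rounds is not known to suffice. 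This is precisely where the cited proofs do their real work: the layered search must either call the oracle on carefully constructed layer graphs in which useless edges cannot arise, or absorb them via a backtracking/restart scheme analyzed with a potential function, at the cost of a number of oracle calls that is a much larger (but still $O_\epsilon(1)$) function of $1/\epsilon$. Two smaller omissions: consistent augmenting paths of length strictly less than $2k+1$ need a separate mechanism to terminate early at a sink, and in the non-bipartite case you must argue that the walks obtained by tracing back through the layers are simple paths. As written, your proposal correctly identifies the proof strategy but does not yet constitute a proof.
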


We also use the following sublinear-time algorithm of \citet*{behnezhad2021} for estimating the size of maximum matching. We note that even though we use this algorithm in a crucial, our final dynamic algorithm does not maintain just the size, but rather the edges of the matching, explicitly. 

\begin{proposition}[\cite{behnezhad2021}]\label{prop:sublinear}
    Let $G=(V, E)$ be any (possibly non-bipartite) $n$-vertex graph. For any $\epsilon > 0$, there is an algorithm that makes $\widetilde{O}(n \poly(1/\epsilon))$ adjacency matrix queries to $G$ and provides an estimate $\widetilde{\mu}$ of the size of maximum matching $\mu(G)$ in $G$ such that with probability $1-1/\poly(n)$, it holds that $$0.5 \mu(G) - \epsilon n \leq \widetilde{\mu} \leq \mu(G).$$
\end{proposition}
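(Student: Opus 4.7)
This is the classical sublinear matching-size estimation result, and the natural approach combines three ingredients: a random-order greedy matching as a $1/2$-approximation, uniform vertex sampling to estimate the matched-fraction, and a local oracle that decides whether a given vertex is matched by the random greedy process, implemented via bounded recursive exploration over adjacency-matrix queries.

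First I would implicitly sample a uniformly random permutation $\pi$ of the $\binom{n}{2}$ vertex pairs (via independent ranks) and define the greedy matching $M_\pi$ obtained by processing pairs in $\pi$-order and including $(u,v)$ in $M_\pi$ iff $(u,v)\in E$ and both $u,v$ are currently unmatched. Since $M_\pi$ is maximal, $|M_\pi|\geq \mu(G)/2$. The estimator $\widetilde\mu$ then comes from sampling $k=O(\log n/\epsilon^2)$ vertices uniformly and setting $\widetilde\mu = (n/2)\cdot \widehat p$, where $\widehat p$ is the empirical fraction of samples matched in $M_\pi$. A Chernoff bound gives $\bigl|\widetilde\mu - |M_\pi|\bigr|\leq \epsilon n/2$ w.h.p., and after a clip $\widetilde\mu \leftarrow \min(\widetilde\mu, n/2)$ together with the obvious $|M_\pi|\leq \mu(G)$ inequality, this yields the two-sided bound $0.5\mu(G)-\epsilon n \leq \widetilde\mu \leq \mu(G)$.

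Second, the key algorithmic step is implementing $\mathrm{IsMatched}(v)$. I would enumerate vertex pairs $(v,u)$ in increasing $\pi(v,u)$-order, query the adjacency matrix to test if $(v,u)\in E$, and for each true edge recursively call $\mathrm{IsMatched}_{<\pi(v,u)}(u)$, i.e., decide whether $u$ is matched by the greedy process restricted to pairs with rank smaller than $\pi(v,u)$. The first candidate $(v,u)$ for which the recursion returns ``unmatched'' becomes the $M_\pi$-partner of $v$; if no such candidate exists, then $v$ is unmatched in $M_\pi$. Only the adjacency probes consume the query budget.

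The main obstacle, as in every result of this flavor, is bounding the total number of adjacency probes to $\widetilde{O}(n\poly(1/\epsilon))$, because the naive recursion tree can touch $\Theta(m)=\Theta(n^2)$ entries. My plan is to (a) truncate each root call after a budget of $B = \widetilde{O}(n\poly(1/\epsilon))$ probes, and (b) argue via an amortized charging argument on the $\pi$-order process that the set of vertices for which $\mathrm{IsMatched}$ exceeds budget $B$ has density $O(\epsilon)$ in $V$ in expectation, so truncation introduces only an additive $\epsilon n$ error to $\widehat p$ (vertices that time out are simply declared unmatched, which only decreases $\widehat p$). Combining this truncation bound with the Chernoff sampling bound and the maximality of $M_\pi$ yields the claimed guarantee; the delicate step is establishing the right truncation analysis in the adjacency-matrix model, where unlike adjacency-list queries one cannot exploit low-degree neighborhoods directly and must instead rely on the uniformity of $\pi$ to show that ``long'' query trees are rare.
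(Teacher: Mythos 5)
The paper does not prove this proposition—it is imported verbatim from~\cite{behnezhad2021}—so the comparison is against that reference. Your high-level scaffolding (random-order greedy maximal matching $M_\pi$, a local ``is $v$ matched?'' oracle that recurses on lower-$\pi$-rank edges, uniform vertex sampling plus a Chernoff bound, and the identity $|M_\pi|\ge\mu(G)/2$) is exactly the standard Nguyen--Onak / Yoshida--Yamamoto--Ito framework that the reference also builds on, so the decomposition of the proof into ``estimator'' and ``oracle query complexity'' is right.

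The gap is the part you flag as delicate and then assert: that truncating each oracle call at budget $B=\widetilde O(n\,\poly(1/\epsilon))$ misclassifies only an $O(\epsilon)$-fraction of vertices. This is not a cleanup step; it is the entire theorem, and the obvious route does not yield it. The YYI analysis bounds the expected number of \emph{recursive calls} for a uniformly random root vertex by $O(\bar d)$, where $\bar d$ is the average degree. In the adjacency-matrix model every recursive visit to a vertex $u$ costs up to $\Theta(n)$ probes just to enumerate the pairs $(u,\cdot)$ in $\pi$-order, so a budget of $\widetilde O(n)$ probes per root call affords only $\widetilde O(1)$ recursive visits; Markov on the YYI bound then says the root call blows the budget with probability $\widetilde\Omega(\bar d)$, which is nowhere near $O(\epsilon)$ once $\bar d=\omega(1)$. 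So ``declare timed-out vertices unmatched'' could silently lose almost all of $M_\pi$. What~\cite{behnezhad2021} actually establishes is a substantially sharper, non-black-box bound on the \emph{total} work of the oracle across the sampled roots (not a per-root Markov/truncation argument), together with an oracle implementation that avoids paying $\Theta(n)$ adjacency-matrix probes per recursive visit by random-sampling candidate neighbors rather than scanning pairs in $\pi$-order. Your sketch replaces that with a charging argument you do not specify, so as written the proof does not close; you would need to supply the global query-tree analysis (or an equivalent) to justify the truncation density bound.
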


Finally, we use the following proposition which turns an additive approximation into a multiplicative approximation. The proof is based on a vertex sparsification idea for matchings \cite{AssadiKL17,BehnezhadDH20} which was adapted to the dynamic setting in the work of \citet*{Kiss-ITCS22}.

\begin{proposition}[\cite{Kiss-ITCS22,AssadiKL17}]\label{prop:multiplicative}
    Suppose there is an adaptive algorithm \mc{A}, that for any parameter $\epsilon > 0 $ and a fully dynamic $n$-vertex graph $G = (V,E)$, maintains a matching of size $\mu(G)-\epsilon n$ in $Q(n,\epsilon)$ amortized time. Then there is an algorithm that maintains a multiplicative $(1-\epsilon)$-approximation maximum matching of $G$ in $\poly(\log(n), \epsilon) \cdot Q(n,\epsilon^2)$ amortized time per update.
\end{proposition}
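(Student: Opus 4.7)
The plan is to split on the scale of $\mu(G)$ and reduce the multiplicative guarantee to the additive one in two regimes. In the \emph{dense} regime where $\mu(G) \geq \epsilon n$, running $\mc{A}$ on $G$ itself with parameter $\epsilon^2$ suffices: it returns a matching of size $\mu(G) - \epsilon^2 n \geq (1-\epsilon)\mu(G)$ in $Q(n, \epsilon^2)$ time, yielding the multiplicative guarantee immediately.

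The interesting case is the \emph{sparse} regime $\mu(G) < \epsilon n$, in which the additive $\epsilon^2 n$ error dwarfs $\mu(G)$. Here I would apply the vertex-sparsification framework of \cite{AssadiKL17,BehnezhadDH20} in its dynamic form \cite{Kiss-ITCS22}: for every scale $k \in \{1,2,4,\ldots,\lceil \epsilon n \rceil\}$, maintain a random sparsifier $H_k$ of $G$ on $n_k = O(k/\epsilon^{c})$ vertices for a suitable constant $c$, such that whenever $\mu(G) \in [k/2, k]$ the sparsifier satisfies $\mu(H_k) \geq (1-\epsilon)\mu(G)$ with high probability. Running $\mc{A}$ on $H_k$ with additive parameter $\epsilon^2$ then produces a matching of size at least $\mu(H_k) - \epsilon^2 n_k$; choosing $n_k$ so that $\epsilon^2 n_k \leq \epsilon k$ makes this at least $(1-O(\epsilon))\mu(G)$ once lifted back to $G$. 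Because the right value of $k$ is unknown, all $O(\log n)$ sparsifiers are maintained in parallel and we report the largest lifted matching.

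Every edge update to $G$ induces only $O(1)$ edge updates in each $H_k$, so the total amortized update-time is $O(\log n) \cdot Q(n, \epsilon^2)$ for the calls to $\mc{A}$ plus a $\poly(\log n, 1/\epsilon)$ overhead for sparsifier maintenance and lifting; after rescaling $\epsilon$ by a constant this matches the proposition's bound. The main obstacle I anticipate is correctness against the adaptive adversary: the sparsifiers depend on random vertex partitions, and once the adversary has inferred a partition it can steer $H_k$ into a worst case. I would resolve this in the standard way by re-sampling the randomness of each $H_k$ every $\Theta(\epsilon k)$ updates, forming ``phases'' within which the partition is fresh; this composes cleanly with the adaptive guarantee of $\mc{A}$ since $\mc{A}$ is already adaptive, so it does not matter that its input $H_k$ is produced from adaptively chosen queries. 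The remaining work is routine bookkeeping to union-bound over all $O(\log n)$ scales and $\poly(n)$ phases.
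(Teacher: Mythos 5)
The paper does not prove this proposition; it invokes it as a black box from \cite{Kiss-ITCS22,AssadiKL17}, so there is no in-paper argument to compare against. Your sketch follows the right high-level route---vertex sparsification at $O(\log n)$ geometric scales, combined with running $\mc{A}$ directly when $\mu(G) \geq \epsilon n$---which is indeed the approach behind the cited reduction, and the dense-regime calculation is correct.

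The concrete gap is in the re-sampling step, which you dismiss as routine. A fresh hash for $H_k$ renders the state of $\mc{A}$ on the old $H_k$ useless, so you must initialize a new copy of $\mc{A}$ by re-inserting every current edge of $G$ projected to the new sparsifier. Even just computing the new projection requires a pass over $E(G)$, and when $\mu(G) \approx k$ the graph can have $\Theta(kn)$ edges (e.g.\ $k$ vertices adjacent to everything). Amortized over the $\Theta(\epsilon k)$ updates of your phase, the rebuild alone costs $\Omega(n/\epsilon)$ per update, which is not of the form $\poly(\log n, 1/\epsilon)\cdot Q(n, \epsilon^2)$ and is \emph{not} subsumed by the per-update cost of $\mc{A}$. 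This is exactly where the technical content of \cite{Kiss-ITCS22} lives: one needs either a sparsifier whose correctness against an adaptive adversary does not require frequent re-randomization, or an incremental rebuild whose cost can be charged to the intervening updates rather than to $|E(G)|$. Your proposal does not supply either. A smaller but necessary fix: the exponent in $n_k = O(k/\epsilon^c)$ must be pinned to $c=1$ so that $\epsilon^2 n_k = O(\epsilon k) = O(\epsilon\,\mu(G))$; with the $O(k\log n/\epsilon^2)$-sized sparsifiers that some of the cited lemmas produce, the additive error $\epsilon^2 n_k$ already exceeds $\epsilon k$ and the multiplicative guarantee fails.
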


\section{Our Dynamic Algorithm for Approximate Maximum Matching}

In this section, we present our algorithm and prove \cref{thm:fully-dynamic}. We start in \cref{sec:static} with a static {\em potentially sublinear-time} algorithm that is one of the main building blocks of our final algorithm. We then formalize our dynamic algorithm in \cref{sec:dynamic}. In \cref{sec:correctness} we prove correctness of our dynamic algorithm and analyze its running time in \cref{sec:runtime}.

\subsection{A Static Potentially-Sublinear Time Algorithm via Random Sampling} \label{sec:static}

In this section, we present a {\em potentially sublinear time} algorithm that given an $n$ vertex graph $G=(V, E)$, finds a matching $M$ that is by an additive factor of at most $\epsilon n$ smaller than the maximum matching of $G$. The algorithm, in addition, returns a {\em certificate} $M_C$, which is another matching of $G$ that we use to bound the running time of the algorithm. Particularly, denoting by $d$ the average degree in the subgraph induced on the vertices of the certificate (i.e., graph $G[V(M_C)]$), the running time of the algorithm overall is $\widetilde{O}(n^{2+2\epsilon}/d)$. If the certificate $M_C$ is close to an induced matching (i.e., $d$ is small), then the algorithm is not better than the trivial algorithm which reads all the edges in quadratic in $n$ time. On the other hand, if $G[V(M_C)]$ is dense, the algorithm runs in sublinear time. This algorithm will be a crucial component of our final dynamic algorithm.

The following is the main lemma of this section.

\newcommand{\Edense}[0]{\ensuremath{E_{\text{\normalfont dense}}}}
\newcommand{\Esparse}[0]{\ensuremath{E_{\text{\normalfont sparse}}}}

\begin{lemma}\label{lem:nd-boosted}
    Let $G=(V, E = \Edense \cup \Esparse)$ be a given $n$-vertex graph  where we have adjacency matrix access to \Edense{}, adjacency list access to \Esparse{}, and $\Edense$ and $\Esparse$ may share edges. For any parameter $\epsilon \in (0, 1)$, there is an algorithm $\textsc{MatchAndCertify}(\Edense, \Esparse, \epsilon)$ that returns a $(1, \epsilon n)$-approximate maximum matching $M$ of $G$ and a {\em certificate} $M_C$. Let $T$ be the running time of the algorithm. Then exactly one of the following two conditions holds:
    \begin{enumerate}[label=$(C\arabic*)$]
        \item\label{itm:c1} $M_C = \emptyset$ and $T = \widetilde{O}_\epsilon(n+|\Esparse|)$.
        \item \label{itm:c2}$M_C$ is a matching in $\Edense$ of size $\Omega_\epsilon(n)$ where $M_C \cap \Esparse = \emptyset$. Additionally, $T= \widetilde{O}_\epsilon(|\Esparse| + n^{2+\epsilon}/d)$ where  $d := |\Edense \cap V^2(M_C)|/|M_C|$ is the average degree of $\Edense[V(M_C)]$.
    \end{enumerate}
\end{lemma}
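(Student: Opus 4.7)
I would combine three ingredients: an $O(1)$-time membership oracle for $E = \Edense \cup \Esparse$ obtained by hashing $\Esparse$ once in $O(|\Esparse|)$ time, the sublinear matching-size estimator of Proposition~\ref{prop:sublinear}, and the boosting framework of Proposition~\ref{prop:boosting-apx}. I first call Proposition~\ref{prop:sublinear} on $G$, answering adjacency queries in $O(1)$ via the oracle, in $\widetilde{O}_\epsilon(n)$ total time; if its output certifies $\mu(G) = O(\epsilon n)$, I return $M = M_C = \emptyset$, which is a valid $(1, \epsilon n)$-approximation and falls in case~\ref{itm:c1} with $T = \widetilde{O}_\epsilon(n + |\Esparse|)$.

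Otherwise $\mu(G) = \Omega_\epsilon(n)$ and I feed $G$ to Proposition~\ref{prop:boosting-apx}, which adaptively invokes a subroutine $\mathcal{A}(G, U, \delta)$ with $\delta = \Omega_\epsilon(1)$ a total of $O_\epsilon(1)$ times. My $\mathcal{A}$ first re-invokes Proposition~\ref{prop:sublinear} on $G[U]$ and aborts with $\emptyset$ when the returned estimate rules out $\mu(G[U]) \geq \delta n$. Otherwise it runs two phases. The sparse phase reads $\Esparse[U]$ by adjacency list in $O(|\Esparse|)$ time and computes a maximal matching $M^s \subseteq \Esparse[U]$; if $|M^s| \geq \poly(\delta)\cdot n$, it is returned and no edges are added to the certificate. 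The dense phase is triggered only when the sparse phase falls short, meaning that nearly all of $G[U]$'s matching must lie in $(\Edense \setminus \Esparse)[U]$; it draws uniformly random pairs $(u, v) \in U \times U$, checks $(u, v) \in \Edense \setminus \Esparse$ in $O(1)$, and greedily extends the matching, stopping as soon as $\poly(\delta)\cdot n$ edges are collected, using a doubling schedule on the sample budget. The certificate $M_C$ is defined as the dense-phase matching from the call that produced the most sampled edges, and is $\emptyset$ if no call ever entered the dense phase.

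If $M_C = \emptyset$, all returned edges come from $\Esparse$, the total cost is $\widetilde{O}_\epsilon(n + |\Esparse|)$, and case~\ref{itm:c1} holds. If $M_C \neq \emptyset$, then by construction $M_C \subseteq \Edense \setminus \Esparse$, and because the triggering call had $\mu(G[U]) = \Omega_\epsilon(n)$ while $\Esparse[U]$ alone was too small, $|M_C| = \Omega_\epsilon(n)$. The main obstacle is the dense-phase running-time analysis: writing $d := |\Edense \cap V^2(M_C)|/|M_C|$, a uniformly random pair in $U \times U$ hits $\Edense \cap V^2(M_C)$ with probability at least $d|M_C|/n^2$, so a Chernoff bound should give that $\widetilde{O}(n^{2+\epsilon}/d)$ samples suffice to find $\Omega(|M_C|)$ such edges with high probability. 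The doubling schedule then makes the total dense-phase cost telescope to $\widetilde{O}_\epsilon(n^{2+\epsilon}/d)$ for the realized $d$, establishing~\ref{itm:c2}. The subtle technical step, which I expect to be the main hurdle, is showing that the algorithm's running time is governed by the density of the certificate it actually outputs, rather than by a density known in advance---a consistency between the algorithm's runtime and the structural parameter $d$ of its output that will require care to formalize.
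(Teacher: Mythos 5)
Your overall architecture matches the paper's: the boosting framework of \cref{prop:boosting-apx}, a greedy maximal matching on $\Esparse[U]$ as the cheap first attempt, and a sampling-based dense phase whose output doubles as the certificate. But the difficulty you flag at the end --- that the running time must be governed by the density of the certificate the algorithm actually outputs --- is the entire content of the lemma, and your proposed dense phase does not resolve it. Sampling uniform pairs from $U\times U$ with a doubling budget ties the running time to how hard it is to find the \emph{last} few matching edges, whereas $d=|\Edense\cap V^2(M_C)|/|M_C|$ is an \emph{average} over all of $V(M_C)$; these can be far apart. Concretely, let $\Edense[U]$ be the disjoint union of a clique on $\delta^2 n$ vertices and an induced matching of size $\delta n$, with target output size $\delta^2 n$. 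Your sampler saturates the clique in $\widetilde{O}_\delta(n)$ samples but must then collect $\Theta(\delta^2 n)$ induced-matching edges, each hit with probability only $\delta/n$ per sample, costing $\Theta(\delta n^2)$ samples in total; yet the clique inflates the certificate's average degree to $d=\Theta(\delta^2 n)$, so the bound claimed in \ref{itm:c2} would be $\widetilde{O}(n^{1+\epsilon}/\delta^2)\ll \delta n^2$. A secondary gap in the same step: a pair landing in $\Edense\cap V^2(M_C)$ with probability $d|M_C|/n^2$ gives you an edge, not a matching edge --- if $\Edense[V(M_C)]$ is star-like most hits fall on already-matched vertices, so the Chernoff argument does not directly produce $\Omega(|M_C|)$ \emph{disjoint} edges.

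The paper closes this gap by sampling per vertex rather than per pair (\cref{alg:randomsampling}, \cref{lem:nd}): in round $i$ every still-unmatched vertex draws $b=n^{2\delta(i-1)}$ random candidate partners, and the certificate is only the matching $M_i$ built in the \emph{final} round. The key structural fact (\cref{cl:huhu-hjb}) is that any vertex surviving to round $i$ must, with high probability, have remaining degree $O(n^{1-2\delta(i-1)}\log n)$ --- otherwise it would already have been matched --- so \emph{every} vertex of the output matching obeys a uniform degree bound inversely proportional to the round's cost, with no averaging loophole. Salvaging your global-pair scheme would require an analogous uniform statement (e.g.\ pruning from the certificate all vertices matchable at a much smaller budget), which essentially reconstructs the per-vertex argument. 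Your remaining ingredients are fine but peripheral: the preliminary call to \cref{prop:sublinear} is consistent with the paper's stated intent, and your $O(1)$ membership oracle for excluding $\Esparse$ during sampling is a legitimate alternative to the paper's choice of sampling within all of $\Edense$ and stripping $\Esparse$ from the certificate afterwards (where one must check, as the paper does, that this costs only a constant factor in $|M_C|$ and in $d$).
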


Towards proving \cref{lem:nd-boosted}, we first prove the following \cref{lem:nd} which solves a slightly simpler subproblem that finds a large matching in a given induced subgraph of $G$. We will later prove \cref{lem:nd-boosted} by combining \cref{lem:nd} with \cref{prop:boosting-apx}.

\begin{lemma}\label{lem:nd}
    Let $G= (V,E)$ be a given $n$-vertex graph to which we have adjacency matrix access. Let $U \subseteq V$ be a given vertex subset, and let $\delta \in (0,1)$ be a given parameter such that $\mu(G[U]) \geq \delta n$. There is an algorithm that finds a matching $M \subseteq G[U]$ of size at least $\delta^2 n$. The algorithm runs in  $\widetilde{O}(n^{2+2\delta}/d)$ time where $d = |E \cap V^2(M)|/|M|$.
\end{lemma}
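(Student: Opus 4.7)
The plan is to design a phase-based random-sampling algorithm whose per-phase sample budget corresponds to geometrically decreasing guesses for the average degree $d$ of the output's induced subgraph, and whose adaptive termination lets the runtime match $\widetilde{O}(n^{2+2\delta}/d)$.

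\textbf{Algorithm.} Run phases $i = 0, 1, \ldots, \lceil \log_2 n \rceil$ with guess parameter $d_i := n/2^i$. In phase $i$, draw $T_i = \widetilde{\Theta}(n^{2+2\delta}/d_i)$ fresh uniformly random pairs $(u,v) \in U \times U$, query the adjacency matrix on each, and if $\{u,v\} \in E$ and both endpoints are currently unmatched in the maintained matching $M$, add $\{u,v\}$ to $M$. Terminate as soon as $|M| \geq \delta^2 n$.

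\textbf{Correctness.} The final phase's budget $T_{\lceil \log_2 n \rceil} = \widetilde{\Theta}(n^{2+2\delta})$ is far more than needed to guarantee $|M| \geq \delta^2 n$ under the hypothesis $\mu(G[U]) \geq \delta n$: each sampled pair independently hits an edge of a fixed maximum matching $M^\star \subseteq G[U]$ with probability $\geq 2\mu/|U|^2 \geq 2\delta/n$, so $\widetilde{\Theta}(\delta n^2)$ samples yield $\delta^2 n$ distinct $M^\star$-hits with high probability by Chernoff, and greedy retains at least half of these by maximality. Thus termination holds with high probability.

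\textbf{Runtime.} Let $i^\star$ be the terminating phase and $d^\star := d_{i^\star}$. Since the $T_i$ double across phases, $\sum_{i \leq i^\star} T_i = O(T_{i^\star}) = \widetilde{O}(n^{2+2\delta}/d^\star)$. The crux is to prove $d^\star \geq \widetilde{\Omega}(d \cdot n^{2\delta})$ with high probability, which then yields $\widetilde{O}(n^{2+2\delta}/(d \cdot n^{2\delta})) = \widetilde{O}(n^2/d)$, well within the claimed bound $\widetilde{O}(n^{2+2\delta}/d)$. I would establish this by arguing that every edge added to $M$ lies in $E \cap V^2(M)$, and the expected number of samples landing in $E \cap V^2(M)$ in phase $i$ is $T_i \cdot d|M|/|U|^2 = \widetilde{\Omega}(|M|)$ already when $d_i = O(d \cdot n^{2\delta})$; by Chernoff plus maximality this produces a matching of size $\delta^2 n$ supported in $V(M)$ whp. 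Hence the algorithm is guaranteed to have stopped by the first such phase, forcing $d^\star \geq \widetilde{\Omega}(d \cdot n^{2\delta})$.

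\textbf{Main obstacle.} The main technical obstacle is the probabilistic dependence between the random samples and the identity of the output $M$: since $M$, and therefore $d$, is determined by the samples themselves, one cannot naively condition on a specific $V(M)$ when invoking Chernoff on the number of hits inside $E \cap V^2(M)$. I would resolve this by fixing, for each candidate support $W \subseteq U$ of size $2\delta^2 n$, the purely deterministic event that the $T_i$ samples contain enough edges in $E \cap W \times W$ to admit a matching of size $\delta^2 n$ in $G[W]$, applying Chernoff to this fixed event, and absorbing the combinatorial cost of ranging over supports $W$ into the generous $n^{2\delta}$ slack in $T_i$. Because a crude union bound over all $\binom{|U|}{2\delta^2 n}$ choices of $W$ may exceed this slack, a more refined tail bound, e.g. a second-moment argument restricted to supports $W$ whose induced subgraph has density consistent with the realized $d$, is likely needed to close the analysis.
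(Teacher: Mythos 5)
There is a genuine gap, and it is in the part you yourself flag as unresolved: the runtime bound in terms of $d$. Two problems compound. First, your algorithm returns the \emph{cumulative} matching $M$ accumulated over all phases, so $V(M)$ can contain high-degree vertices that were matched cheaply in early phases, which inflates $d = |E \cap V^2(M)|/|M|$ without making the algorithm terminate any earlier. Concretely, let $G[U]$ be the disjoint union of a clique on $n^{0.9}$ vertices and an induced near-perfect matching on the remaining vertices. Uniform pair sampling saturates the clique quickly (contributing only $n^{0.9}/2 \ll \delta^2 n$ edges), after which reaching size $\delta^2 n$ requires hitting edges of the induced matching, each with probability $\Theta(1/n^2)$, so the algorithm runs for $\widetilde{\Theta}(n^2)$ samples; yet the output's support contains the clique, giving $d = \widetilde{\Theta}(n^{0.8})$ and a claimed budget of $\widetilde{O}(n^{1.2+2\delta})$, which is violated for small $\delta$. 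Second, even setting that aside, the step ``$T_i \cdot d|M|/|U|^2 = \widetilde{\Omega}(|M|)$ hits in $E \cap V^2(M)$, hence a matching of size $\delta^2 n$ supported in $V(M)$'' is a non sequitur: the $d|M|$ edges of $G[V(M)]$ may be concentrated on $O(\sqrt{d|M|})$ vertices, in which case arbitrarily many hits yield only a matching of size $O(\sqrt{d|M|})$. The maximality trick you use for correctness works because the hits there land on a \emph{fixed matching} $M^\star$; it does not transfer to hits on an arbitrary edge set. Your proposed repair via a union bound over supports $W$ does not close, as you note, and a second-moment restriction does not remove the adaptivity of $V(M)$.

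The paper's proof (\cref{alg:randomsampling}) avoids both issues with a different mechanism. It samples \emph{per vertex} with a budget $b$ that grows geometrically by $n^{2\delta}$ across $1/(2\delta)$ iterations, and, crucially, returns only the single iteration's matching $M_i$ that first reaches size $\delta^2 n$, not the union. The key structural fact (\cref{cl:huhu-hjb}) is that any vertex which \emph{survives} iteration $i$ unmatched must, with high probability, have had residual degree $O(n^{1-2\delta(i-1)}\log n)$ among the surviving vertices --- otherwise sampling $b = n^{2\delta(i-1)}$ candidates would have matched it. Since every vertex of the returned $M_i$ survived iteration $i-1$, this converts ``the algorithm ran long'' directly into ``$d$ is small,'' which is exactly the implication your uniform-pair, cumulative-matching scheme cannot certify. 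If you want to salvage your phase structure, you would at minimum need to (a) output only the edges found in the terminating phase restricted to vertices that were still unmatched entering it, and (b) prove a per-vertex survival-implies-low-degree statement for uniform pair sampling; at that point you have essentially reconstructed the paper's argument.
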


\begin{proof}
The algorithm is formalized below as \cref{alg:randomsampling}. 
\begin{figure}[h]
\begin{algenv}{\textsc{RandomSampling}$(G[U],\delta)$}{alg:randomsampling}
    \nonl \textbf{Parameter:} $\delta \in (0, 1]$.

    $M \gets \emptyset$, $b \gets 1, U' \gets U$.

    \For{$i=1$ to $1/2\delta$}{

        $M_i \gets \emptyset$
        
        \For{vertex $v \in U'$}{
        
        $S \gets $ Sample $b$ vertices from $U'$ uniformly at random.

        \If {$\exists u \in S$ such that $(u, v) \in E$ and $u$ is available}
        {
        
        $M_i \gets (v,u) \cup M_i$.

        Remove $u$ and $v$ from $U'$.
        
        }

        $M \gets M \cup M_i$.
        
        $b \gets n^{2\delta} b $
    }
    \If {$|M_i| \geq \delta^2 n$}{
        \Return $M_i$.
    }
    }
\end{algenv}
\end{figure}
Note that if the algorithm returns a matching $M_i$, then it is guaranteed that the matching has our desired size of at least $\delta^2 n$. So it suffices to prove that at some point $M_i$ is as large as we want. To see this, note that once the budget $b$ reaches $n$, we will go through all the edges, indicating that $M$, the union of matching $M_1,M_2,\dots, M_{1/(2\delta)}$, is a maximal matching of $G[U]$. Since the size of a maximal matching is at least half the size of a maximum matching, we have $|M| \geq \mu(G[U])/2 \geq \delta n/2$. This means that at least one of the matchings in $\{M_1, ..., M_{1/(2\delta)}\}$ is of size at least $\frac{\delta^2n/2}{1/(2\delta)} = \delta^2 n$.

Next, we focus on relating the running time of the algorithm to the parameter $d$, i.e., the average degree of the output matching of \cref{alg:randomsampling}. Let us first introduce some notation. Let $U'_i$ be the value of the set $U'$ at the end of iteration $i$. Additionally, for every vertex $v$, let $d^i_v$ be the degree of vertex $v$ in graph $G[U'_i]$. We start by proving the following useful claim.

\begin{claim}\label{cl:huhu-hjb}
    It holds with probability $1-1/n$ for every $v \in U'_i$ and every $i$ that $d^i_v = O(n^{1-2\delta(i-1)} \log n)$.
\end{claim}
\begin{proof}
    Fix $v$ and $i$. We prove that either $v \not\in U'_i$ or $d^i_v = O(n^{1-2\delta(i-1)} \log n)$ with probability at least $1-1/n^3$. A union bound over the choices of $v$ and $i$ completes the proof.
    
    Consider iteration $i$ of the algorithm and suppose $v \in U'_i$.  This means that in this iteration we attempt to match vertex $v$ by sampling $b = n^{2\delta(i-1)}$ vertices in $U' \supseteq U'_{i}$ and checking if any of them is a neighbor of $v$. Let $d''$ be the number of neighbors of $v$ in $U'$ when processing $v$ in iteration $i$. We have
    $$
        \Pr[v \text{ not matched by } M_i] \leq \left(1-\frac{d''}{|U'|}\right)^b \leq \left(1-\frac{d''}{n}\right)^b \leq \exp\left( - \frac{d'' b}{n} \right) = \exp\left( - \frac{d''}{n^{1-2\delta(i-1)}} \right).
    $$
    Now consider two cases. If $d'' \geq 3 n^{1-2\delta(i-1)} \log n$, then $v$ is matched in $M_i$ with probability at least $1-1/n^3$ and thus will be removed from $U'$. Otherwise, since $d^i_v \leq d''$, we have $d^i_v = O(n^{1-2\delta(i-1)} \log n)$. Thus, in either case, the statement holds.
\end{proof}

Let $M_i$ be the matching that \cref{alg:randomsampling} returns. Note that in iteration $j$ we have $b = n^{2(j-1)\delta}$ and the algorithm takes $O(nb) = O(n^{1+2(j-1)\delta})$ time. So the running time of the algorithm until termination is $O(n) + O(n^{1+2\delta}) + \ldots + O(n^{1+2(i-1)\delta}) = O(n^{1+2(i-1)\delta})$. Since every vertex $v$ matched by $M_i$ must belong to $U'_{i-1}$, applying \cref{cl:huhu-hjb} gives $d^{i-1}_v = O(n^{1-2\delta(i-2)} \log n)$. This implies that the total number of edges in $G[V(M_i)]$ can be upper bounded by $|M| \cdot O(n^{1-2\delta(i-2)} \log n)$. Letting $d = \Theta(n^{1-2\delta(i-2)} \log n)$, the running time can be re-written as
$$
 O(n^{1+2(i-1)\delta}) = O(n^{1+2(i-1)\delta} d / d) = O(n^{1+2(i-1)\delta +1-2 \delta(i-2)} \log n / d) = O(n^{2+2\delta}\log n/d),
$$
which is the claimed bound.
\end{proof} 

Next, we apply \cref{prop:boosting-apx} on \cref{lem:nd} to complete the proof of \cref{lem:nd-boosted}.

\begin{proof}[Proof of \cref{lem:nd-boosted}] 

For any graph $H$, let \textsc{GreedyMatching}($H$) be a greedy algorithm that goes over all the edges of $H$ one by one and adds an edge $(u,v)$ to the matching if $u$ and $v$ are not matched already. Note that the output of this algorithm is a maximal matching in $H$.

The algorithm \textsc{MatchAndCertify} proceeds as follows. Our goal is to run the algorithm of \cref{prop:boosting-apx} to find a matching of size $\mu(G) - \epsilon n$ in $G$. This algorithm makes $t= O_\epsilon(1)$ adaptive calls 
$$
\mc{A}(G, U_1,\delta_1), \dots, \mc{A}(G, U_t,\delta_t)
$$
to an algorithm $\mathcal{A}$ that outputs a matching of size $\poly(\delta_i)\cdot \mu(G[U_i])$ in $G[U_i]$ provided that $\mu(G[U_i]) \geq \delta_i n$. Here $U_i$ is a subset of $V$ and parameter $\delta_i \leq \epsilon$ is a function of $\epsilon$. In order to use \cref{prop:boosting-apx}, we have to specify how the algorithm $\mc{A}$ is implemented.

We implement $\mathcal{A}(G, U_i, \delta_i)$ as follows. Let $G_i := G[U_i] \cap \Esparse$. As the first step, we call $\textsc{GreedyMatching}(G_i)$ to obtain matching $M_i$ in $O(|\Esparse|)$ time. If $|M_i| \geq \delta^2_i\cdot n / 8$, then we have a large enough matching in $G[U_i]$, and $\mathcal{A}$ terminates by outputting $M_i$ and we set $d_i = \infty$. Otherwise, we run $\textsc{RandomSampling}(G[U_i] \cap \Edense, \delta_i)$ from \cref{lem:nd}. Note that \cref{lem:nd} requires adjacency matrix access to its input graph which we can provide, given the guarantee of the lemma that we have adjacency matrix access to $\Edense$ and that the set $U_i$ is given explicitly. Since $\textsc{GreedyMatching}$ returns a 1/2-approximate maximum matching, we get that
$$
    \mu(G[U_i] \cap \Edense) \geq \mu(G[U_i]) - \mu(G[U_i] \cap \Esparse) \geq \delta_i n - 2|M_i| = 3\delta_i n /4.
$$
Therefore, algorithm $\textsc{RandomSampling}$ outputs a matching $M_{C_i}$ of size at least $9\delta^2_i n/16$ in $G[U_i] \cap \Edense$. By \cref{lem:nd}, this runs in $\widetilde{O}(n^{2+2\delta_i} /d_i) = \widetilde{O}(n^{2+\epsilon} /d_i)$ time where
$d_i = |E \cap V^2(M_{C_i})|/|M_{C_i}|$.

From our discussion above, in the $i$-th call to $\mathcal{A}$, this algorithm outputs a matching of size at least $\min\{ \delta_i^2 n /8, 9\delta_i^2 n/16\} = \delta_i^2 n /8$. Therefore, the algorithm of \cref{prop:boosting-apx} outputs a matching $M$ of size $\mu(G) - \epsilon n$. 

It remains to analyze the running time by specifying the outputs $d$ and $M_C$ and showing that at least one of the conditions \ref{itm:c1} or \ref{itm:c2} hold. 

If we never call the $\textsc{RandomSampling}$ algorithm, we simply set $d = 0$ and $M_C = \emptyset$ as all calls to $\mc{A}$. The total running time spent by \cref{prop:boosting-apx} preparing the vertex subsets $U_i$ for each $i \in [t]$, and the running time of the calls to the \textsc{GreedyMatching} is
 $$
 \widetilde{O}_\epsilon(n) + \sum_{i \in [t]} O(|E(G_i)|) = \widetilde{O}_\epsilon(n) + t|\Esparse| = \widetilde{O}_\epsilon(n + |\Esparse|).$$
Thus, condition \ref{itm:c1} holds. 

Otherwise, we let $j = \argmin_{i \in [t]} d_i$, set $d := d_j$ and return the certificate $M_C = M_{C_j} \setminus \Esparse$. We now show that condition \ref{itm:c2} holds.  

Let us now bound the running time. This is the cost of preparing the vertex subsets of \cref{prop:boosting-apx}, the running time of the calls to the \textsc{GreedyMatching}, and the running time of the calls to \textsc{RandomSampling}. The total running time is therefore $$\widetilde{O}_\epsilon(n) + t|\Esparse| +  \widetilde{O}\left(\sum_{i=1}^t \frac{n^{2+\epsilon}}{d_i}\right) = \widetilde{O}_\epsilon\left( |\Esparse| + \frac{n^{2+\epsilon}}{d}\right).$$

The bound above is the one required by \ref{itm:c2}, except that we defined $d = d_j = |E \cap V^2(M_{C_j})|/|M_{C_j}|$ whereas the certificate $M_C$ is a subset of $M_{C_j}$. To fix this, in what follows we prove that $d$ is at least a third of the value of the right parameter $d'' := |E \cap V^2(M_{C})|/|M_{C}|$, which completes the proof.

The fact that we run $\textsc{RandomSampling}$ on $G[U_j]$ implies that $\textsc{GreedyMatching}(G_j)$ did not return a large enough matching. This implies that $\mu(G_j) \leq \delta^2_j n/4$. Therefore, once we remove the edges of $\Esparse$ from $M_{C_j}$, we reduce the size of this matching by at most $\delta^2_j n/4$. Combined with $|M_{C_j}| \geq 9\delta_i^2 n/16$, this implies that $|M_C| \geq |M_{C_j}| - \delta_j^2/4 \geq 7|M_{C_j}|/16$. Hence
$$
d'' = \frac{|E \cap V(M_C)|}{|M_C|} \leq \frac{|E \cap V(M_{C_j})|}{7|M_{C_j}|/16} < 3d,
$$
completing the proof as discussed above.
\end{proof}

\subsection{Our Dynamic Algorithm via \cref{lem:nd-boosted}}\label{sec:dynamic}

In this section, we formalize our dynamic algorithm for \cref{thm:fully-dynamic} based on the static \textsc{MatchAndCertify} algorithm of \cref{lem:nd-boosted}. Our algorithm is formalized as \cref{alg:additive}. 

We prove in \cref{sec:correctness} that our algorithm maintains an additive $(1, \epsilon n)$-approximation of the graph at anytime.

\begin{lemma}[Correctness of \cref{alg:additive}]\label{lem:correctness}
    At any point, the output matching $M_G$ of \cref{alg:additive} is a $(1, \epsilon n)$ approximate maximum matching of graph $G$.
\end{lemma}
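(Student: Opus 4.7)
The plan is to prove the $(1, \epsilon n)$-additive guarantee by a standard phase-based argument, layered on top of the static guarantee of \textsc{MatchAndCertify} from \cref{lem:nd-boosted}. Based on the high-level description in \cref{sec:techniques} (``find the matching once, do nothing for the next $\epsilon n$ updates, then repeat''), the first step is to read off from \cref{alg:additive} the phase structure: the algorithm partitions the update stream into phases of length $L \leq (\epsilon - \epsilon')n$ for some internal tolerance $\epsilon' \leq \epsilon/2$, invokes \textsc{MatchAndCertify} at the start of each phase to reset $M_G$, and between invocations modifies $M_G$ only by deleting edges of $G$ that happen to lie in $M_G$. The latter point immediately gives that $M_G$ remains a valid matching throughout, since removing edges cannot break vertex-disjointness.

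The heart of the proof is a short ``stability under updates'' calculation inside a single phase. Fix a time $t$, let $G_0$ and $M_G^{0}$ denote the graph and output matching at the start of the current phase, and let $k \leq L$ be the number of updates processed so far in this phase, split into $\mathrm{ins}$ insertions and $\mathrm{del}$ deletions with $\mathrm{ins} + \mathrm{del} = k$. By \cref{lem:nd-boosted} applied at the phase start, $|M_G^{0}| \geq \mu(G_0) - \epsilon' n$. Since each insertion raises $\mu$ by at most $1$ and each deletion lowers it by at most $1$, we have $\mu(G_t) \leq \mu(G_0) + \mathrm{ins}$; since each deletion removes at most one edge from $M_G$ and nothing else affects $M_G$ within a phase, $|M_G^{t}| \geq |M_G^{0}| - \mathrm{del}$. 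Combining,
\[
\mu(G_t) - |M_G^{t}| \;\leq\; \bigl(\mu(G_0) - |M_G^{0}|\bigr) + \mathrm{ins} + \mathrm{del} \;\leq\; \epsilon' n + L \;\leq\; \epsilon n,
\]
which is exactly the claimed additive guarantee. With $\epsilon'$ chosen as $\epsilon/2$ and $L = \epsilon n / 2$, the two budgets split evenly; any constants hidden in $\Theta_\epsilon(n)$ can be absorbed by shrinking $\epsilon'$.

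The main obstacle I expect is not the inequality above but the bookkeeping needed to justify its premise. Namely, one must verify two structural facts about \cref{alg:additive} that are not visible from the lemma statement alone: first, that when \textsc{MatchAndCertify} is invoked it is passed inputs $(\Edense, \Esparse, \epsilon')$ whose union is the \emph{current} edge set of $G$ (with the certificate store $\Hcert$ folded into one side), so that the output is an additive approximation of $\mu(G)$ rather than of some induced or stale subgraph; and second, that within a phase the algorithm's maintenance of $M_G$ does nothing beyond deleting vanished edges, so that no hidden operation could widen the gap $\mu(G_t) - |M_G^t|$ faster than one per update. Once these two invariants are pinned down against the pseudocode, the inequality above closes the proof. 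The randomness and adaptivity pieces are not required at this stage, since \cref{lem:nd-boosted} already supplies an additive approximation with high probability against adaptive adversaries; a union bound over the $\poly(n)$ phases preserves this.
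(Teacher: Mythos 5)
Your proposal is correct and follows essentially the same route as the paper: verify that the two inputs to \textsc{MatchAndCertify} union to the \emph{current} graph $G$, then observe that the additive gap $\mu(G) - |M_G|$ grows by at most one per update over a window of $\epsilon n/2$ updates, so the initial $\epsilon n/2$ error degrades to at most $\epsilon n$. The one step you deferred --- checking that $(G - \Gadd) \cup (\Gadd \cup \Hcert - \Gdel) = G$ --- is exactly the one-line set identity the paper carries out, using $\Gadd \cap \Gdel = \emptyset$ and $\Hcert - \Gdel \subseteq G$.
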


Later in \cref{sec:runtime}, we analyze the running time of the algorithm and prove the following bound on it.

\begin{lemma}[Runtime of \cref{alg:additive}]\label{lem:runtime}
    \cref{alg:additive} runs in time
    $$
        \widetilde{O}_\epsilon\left( \frac{t}{n} + \frac{n^{2+\epsilon} \cdot \ORS(\Theta_\epsilon(n))}{t} \right),
    $$ where $t$ is the threshold of the algorithm.
\end{lemma}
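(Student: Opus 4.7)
The plan is to bound the total running time of the algorithm inside a window of $t$ consecutive updates (one cycle between two resets of $\Hcert$) and then divide by $t$ to obtain the amortized per-update cost. The time inside the window is incurred only by the invocations of \textsc{MatchAndCertify}, and \cref{lem:nd-boosted} splits the cost of each invocation into a \emph{sparse} term $\widetilde{O}_\epsilon(|\Esparse|)$, paid by every call, and a \emph{dense} term $\widetilde{O}_\epsilon(n^{2+\epsilon}/d_i)$, paid only when the $i$-th call triggers condition~\ref{itm:c2} with certificate average degree $d_i$. The approximation slack afforded by \cref{prop:multiplicative} together with the boosting scheme of \cref{prop:boosting-apx} lets the algorithm postpone a full recomputation for $\Theta_\epsilon(n)$ updates, so the window contains at most $O_\epsilon(t/n)$ invocations.

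For the sparse term I would observe that $\Esparse$ at any moment in the window is the union of the at most $t$ edges inserted or deleted since the last reset together with the certificate edges sitting in $\Hcert$. Since each previous call deposits a matching certificate of size $O(n)$ into $\Hcert$, and there are at most $O_\epsilon(t/n)$ such calls, $|\Hcert|=O_\epsilon(t)$ throughout the window, and hence $|\Esparse|=O_\epsilon(t)$. Summed over $O_\epsilon(t/n)$ invocations the total sparse cost is $\widetilde{O}_\epsilon(t^2/n)$, which is $\widetilde{O}_\epsilon(t/n)$ per update and matches the first summand.

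The dense term is the heart of the argument. Letting $I$ index the invocations that trigger condition~\ref{itm:c2}, the goal is the charging inequality
\[
\sum_{i\in I} \frac{1}{d_i} \;\leq\; O_\epsilon\bigl(\ORS_n(\Theta_\epsilon(n))\bigr).
\]
The guarantee $M_{C_i}\cap \Esparse=\emptyset$ from \cref{lem:nd-boosted}\ref{itm:c2}, together with the invariant $\Hcert\subseteq \Esparse$, ensures that $(M_{C_i})_{i\in I}$ is a sequence of pairwise edge-disjoint matchings of size $\Omega_\epsilon(n)$. Because at the time of the $i$-th call there are only $|M_{C_i}|\cdot d_i$ edges of $\Edense$ inside $V(M_{C_i})$, a random subsample of $M_{C_i}$ that keeps each matching edge with probability $\Theta(1/d_i)$ produces an induced sub-matching $M_i'\subseteq M_{C_i}$ of size $\Omega(|M_{C_i}|/d_i)$ that is induced inside $M_1'\cup\cdots\cup M_i'$ in the sense of \cref{def:ORS}. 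Grouping $i\in I$ dyadically by the value of $d_i$ and running this extraction bucket by bucket, then invoking the natural monotonicity of $\ORS_n(\cdot)$ in the matching size, yields the displayed inequality. This ORS charging step is the main obstacle I anticipate, because it must be executed so that edges of later certificates with endpoints inside earlier $V(M_{C_j})$ do not destroy the induced property of $M_j'$; the edge-disjointness of the $M_{C_i}$'s combined with the bookkeeping inside $\Hcert$ is exactly what lets the interference terms be absorbed into the $d_j$'s.

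Substituting the inequality into the dense contribution gives $\widetilde{O}_\epsilon(n^{2+\epsilon})\cdot\sum_{i\in I}1/d_i=\widetilde{O}_\epsilon(n^{2+\epsilon}\cdot\ORS_n(\Theta_\epsilon(n)))$ over the window, amortizing to $\widetilde{O}_\epsilon(n^{2+\epsilon}\cdot\ORS_n(\Theta_\epsilon(n))/t)$ per update and supplying the second summand. Adding the two summands produces the stated bound. Beyond the ORS charging, the remaining work is mechanical bookkeeping: counting invocations via \cref{prop:boosting-apx}, tracking $|\Esparse|$ across the window, and reconciling the $\epsilon$-dependent constants introduced by \cref{prop:multiplicative}.
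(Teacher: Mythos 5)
Your phase decomposition, the sparse/dense split of the cost of each \textsc{MatchAndCertify} call, and the amortization of the two resulting terms over $\epsilon n$ updates and over the $t$-update phase respectively all match the paper, and your accounting of $|\Esparse|=O_\epsilon(t)$ is fine. The gap is in the one step you yourself flag as the obstacle: the charging inequality $\sum_{i\in I} 1/d_i = \widetilde{O}_\epsilon(\ORS_n(\Theta_\epsilon(n)))$, which is a standalone lemma in the paper (\cref{clm:log n ORS}, via \cref{clm: ORS d}) and whose proof your sketch does not reproduce.

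Concretely, your proposed extraction subsamples the \emph{edges within} each certificate $M_{C_i}$ at rate $\Theta(1/d_i)$, producing matchings $M_i'$ of size $\Omega(n/d_i)$. Even granting inducedness, packing many matchings of size $\Theta(n/d_i)$ only certifies a lower bound on $\ORS_n(\Theta(n/d_i))$, and the monotonicity of $\ORS_n(\cdot)$ goes the wrong way for you: $\ORS_n(r)$ is non-increasing in $r$, so $\ORS_n(\Theta(n/d_i)) \geq \ORS_n(\Theta(n))$ and you cannot conclude the displayed inequality against $\ORS_n(\Theta_\epsilon(n))$. The paper instead subsamples \emph{whole matchings}, keeping $M_{C_i}$ with probability $\propto 1/d_i$, so every retained matching still has size $\Theta(n)$; the surviving count is then $\Omega(\epsilon^2\sum_i 1/d_i)$ in expectation (after dyadic bucketing by $d_i$), which is what yields the inequality in the right direction. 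Inducedness is not automatic from edge-disjointness: the paper first discards edges $e$ for which $x_e=\sum_{j}1/d_j$ (over later certificates containing both endpoints of $e$) is large, and then uses Markov to discard matchings that lose too many edges to collisions among the sampled matchings, so that each survivor retains a $(1-O(\epsilon))$ fraction of its edges. Finally, you do not address the direction of the ordering: since $\Edense=G-\Gadd$ is decremental within a phase, the edges of all \emph{later} certificates were present when $M_{C_i}$ was returned, so the sequence must be charged in \emph{reverse} order $M_{C_k},\ldots,M_{C_1}$ for the per-matching degree in the ORS instance to be dominated by $d_i$. Without these three ingredients the dense-term bound does not go through.
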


The combination of the two lemmas above with \cref{prop:multiplicative} implies \cref{thm:fully-dynamic}.

\begin{proof}[Proof of \cref{thm:fully-dynamic}]
The running time of \cref{alg:additive} is analyzed in \cref{lem:runtime}, which gives a boundo f
$$
\widetilde{O}_\epsilon\left( \frac{t}{n} + \frac{n^{2+\epsilon} \cdot \ORS_n(\Theta_\epsilon(n))}{t} \right),
$$
where $t$ is a threshold parameter.
To achieve the amortized time, we balance the two terms in the running time expression by setting
$$
 t = \sqrt{n^{3+\epsilon} \cdot \ORS_n(\Theta_\epsilon(n))}.
$$
By substituting this value of $t$ we have
$$
\widetilde{O}_\epsilon\left( \frac{\sqrt{n^{3+\epsilon} \cdot \ORS_n(\Theta_\epsilon(n))}}{n} + \frac{n^{2+\epsilon} \cdot \ORS(\Theta_\epsilon(n))}{\sqrt{n^{3+\epsilon} \cdot \ORS_n(\Theta_\epsilon(n))}} \right) = \widetilde{O}_\epsilon\left( \sqrt{n^{1+\epsilon} \cdot \ORS(\Theta_\epsilon(n))} \right).
$$

 By \cref{lem:correctness}, \cref{alg:additive} maintains a matching of size $\mu(G) - \epsilon n$ at any point.
 By applying \cref{prop:multiplicative}, we can boost this additive approximation to a multiplicative $(1-\epsilon)$-approximation
 while keeping the update time
$
\widetilde{O}_\epsilon \left( \sqrt{n^{1 + \epsilon} \cdot \ORS(\Theta_{\epsilon}(n))} \right).\qedhere
$
\end{proof}

\begin{figure}
\begin{algenv}{A dynamic $(1, \epsilon n)$-approximate matching algorithm.}{alg:additive}

    \nonl \textbf{Parameters:} $ \epsilon \in (0, 1]$, $t$ the threshold of the algorithm.

    Let $G=(V, E)$ be our input dynamic $n$-vertex graph.
    
    Let $\Gadd$, $\Gdel$, $\Hcert$ be the empty graphs on the vertex set $V$.

    We store all graphs $G, \Gadd, \Gdel, \Hcert$ both in adjacency list and matrix formats.

    $\cupdates \gets 0$ \Comment{\small {\color{gray} Counts the number of edge updates to $G$.}}

    $M_G \gets \emptyset$ \Comment{\small {\color{gray} This is our output matching.}} 
    
     \For{every edge update $e=(u, v)$}{
        
        \If{$e$ was inserted}{
            Add $e$ to $G$, $\Gadd$.

            Delete $e$ from $\Gdel$ (if it exists).
        }\ElseIf{$e$ was deleted}{
            Delete $e$ from $G$, $\Gadd$, $M_G$ (if it exists).
            
            Add $e$ to $\Gdel$.
        }

        $\cupdates \gets \cupdates + 1$

        \If{$\cupdates = t$}{
            Remove all edges in $\Gadd$, $\Gdel$, $\Hcert$. \label{line:phase-end}\Comment{\small {\color{gray} This ensures $|\Gadd|, |\Gdel|, |\Hcert| = O(t)$.}}
            
            $\cupdates \gets 0$
        }

        \If{$\cupdates \mod \frac{\epsilon n}{2} = 0$}{
           Run $\textsc{MatchAndCertify}(G - \Gadd, \Gadd \cup \Hcert - \Gdel, \epsilon/2)$ of \cref{lem:nd-boosted} and let $M$, $M_C$ be its outputs. \label{line:callMAC} \Comment{\small {\color{gray} Note that any adjacency matrix query to $G \setminus \Gadd$ can be answered with one adjacency matrix query to each of $G$ and $\Gadd$. Additionally, we can provide adjacency list access to $\Gadd \cup \Hcert - \Gdel$ in $|\Gadd \cup \Hcert \cup \Gdel| = O(t)$ time.}}

            Add $M_C$ to $\Hcert$.

            $M_G \gets M$.
        }

    }    
\end{algenv}
\end{figure}

\subsection{Correctness of \cref{alg:additive}}\label{sec:correctness}

In this section, we prove \cref{lem:correctness}. 

\begin{proof}[Proof of \cref{lem:correctness}]
     The algorithm works as follows. $\cupdates$ is a counter for the number of updates. Let a phase be $t$ consecutive updates starting with $\cupdates = 0$. The algorithm maintains the fully dynamic graph $G$. All the insertions throughout a phase  are added in $\Gadd$ (if an inserted edge is deleted, we delete it from $\Gadd$ too). All the deletions throughout a phase are added to $\Gdel$ (if a deleted edge is later inserted, we delete it from $\Gdel$ too).

     First, we prove that \textsc{MatchAndCertify} outputs a $(1,\epsilon n)$-approximate matching of graph $G$. To do this, we prove that the union of $\Edense = G -\Gadd$ and $\Esparse = \Gadd \cup \Hcert - \Gdel$ is exactly $G$. To see this, note that
     $$
     \Edense \cup \Edense = (G -\Gadd) \cup (\Gadd \cup \Hcert - \Gdel) = G \cup (\Hcert - \Gdel) = G.
     $$
     The second inequality holds because $\Gadd \cap \Gdel = \emptyset$. The last equality holds because any edge  $e \in \Hcert$ either belongs to $G$ or is deleted at some point, which means $\Hcert - \Gdel \subseteq G$. By \cref{lem:nd-boosted}, this implies that \textsc{MatchAndCertify} outputs a $(1,\epsilon n/2)$-approximate matching $M$ for $G$. Once $M$ is computed, we do not change $M$ within the next $\epsilon n / 2$ updates unless by applying the deletions on it. Each such deletion reduces the size of $M$ by at most 1. Therefore, after $\epsilon n/2$ updates, we have $|M| \geq \mu(G) - \epsilon n$, thus $M$ remains a $(1,\epsilon n)$-approximate matching for $G$ throughout.
\end{proof}

\subsection{Runtime Analysis of \cref{thm:fully-dynamic}}\label{sec:runtime}

Our goal in this section is to prove \cref{lem:runtime}. Namely, we want to bound the running time of \cref{alg:additive} by relating it to density of ORS graphs. The following \cref{clm:log n ORS} is our main tool for this connection to ORS graphs.

\begin{lemma}\label{clm:log n ORS}
    Let $G=(V, E = M_1 \cup \ldots \cup M_t)$ be an $n$-vertex graph where $M_1, \ldots, M_t$ are edge-disjoint matchings of size at least $r$ each. For any $i \in [t]$, define $G_i = (V, E_1 \cup \ldots \cup E_i)$ and let $d_i$ be $|E[G_i] \cap V^2(M_i)|/|M_i|$.  Then for any $\alpha \in (0,1)$, it holds that $$\sum_{i=1}^t \frac{1}{d_i} = O \left( \frac{1}{ \alpha^2} \cdot \ORS_n(r(1-\alpha)) \cdot \log n \right).$$ 
\end{lemma}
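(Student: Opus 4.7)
The plan is to dyadically partition the indices by their $d_i$ values and, inside each part, extract an ordered sequence of sub-matchings that realizes an ORS graph; this lets me invoke $\ORS_n$ as an upper bound on the length of that sequence. Concretely, I would set, for $k = 0, 1, \ldots, \lceil \log n\rceil$,
\[
B_k := \{i \in [t] : d_i \in [2^k, 2^{k+1})\}.
\]
Every index lies in some bucket because $1 \le d_i \le n$, and $\sum_{i \in B_k} 1/d_i \le |B_k|/2^k$, so the lemma reduces to the per-bucket bound $|B_k| \le O(2^k \cdot \ORS_n(r(1-\alpha)) / \alpha^2)$; summing over the $O(\log n)$ buckets then gives the claimed total.

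I would fix a bucket $B_k$, write $D = 2^k$, and sample a random subset $S \subseteq B_k$ by including each $i \in B_k$ independently with probability $q = \alpha^2/(cD)$ for a small absolute constant $c$. The object of interest is the subsequence $(M_i)_{i \in S}$ in the natural order. For $i \in S$, I call an edge $e$ an \emph{extra} of $M_i$ if $e \in \bigcup_{j \in S,\ j<i} M_j$, $V(e) \subseteq V(M_i)$, and $e \notin M_i$; these are exactly the edges blocking $M_i$ from being induced inside the chosen subsequence. The key observation enabling the bucket-by-bucket analysis is that matchings outside $B_k$ contribute no extras here, because they are absent from $\bigcup_{j \in S} M_j$. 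The number of candidate extras of $M_i$ (over $j \in B_k$ with $j<i$) is bounded by $|E[G_i] \cap V^2(M_i)| = d_i|M_i| \le 2D|M_i|$, and each survives with probability $q$ conditional on $i \in S$ by independence of the sampling, so the conditional expected number of extras of $M_i$ is at most $2Dq|M_i| = O(\alpha^2|M_i|)$. Markov's inequality then gives that $M_i$ has at most $\alpha|M_i|$ extras with conditional probability at least $1 - O(\alpha) = \Omega(1)$ (the lemma being vacuous when $\alpha$ is close to $1$).

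Let $S' \subseteq S$ collect the $i$'s on which this cleanup bound holds, so $\E[|S'|] = \Omega(q|B_k|) = \Omega(\alpha^2|B_k|/D)$. For each $i \in S'$ I would delete, for every extra edge, one matching pair of $M_i$ covering an endpoint of that extra, obtaining a sub-matching $M_i''$ of size $|M_i''| \ge (1-\alpha)|M_i| \ge (1-\alpha)r$ in which no extra survives inside $V^2(M_i'')$. Consequently $M_i''$ is induced in $\bigcup_{j \in S',\ j \le i} M_j$, and hence also in the smaller graph $\bigcup_{j \in S',\ j \le i} M_j''$. This makes $(M_i'')_{i \in S'}$ into an $\ORS_n(r(1-\alpha), |S'|)$ graph, so $|S'| \le \ORS_n(r(1-\alpha))$. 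The probabilistic method produces a realization with $|S'|$ at least its expectation, and rearranging delivers the per-bucket bound $|B_k|/2^k = O(\ORS_n(r(1-\alpha))/\alpha^2)$ as required.

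The hardest part, in my view, is recognizing that matchings outside $B_k$ can be ignored completely: a priori, $M_i$ for $i \in B_k$ could receive arbitrarily many extras coming from matchings of very different densities, and controlling all of them with a single sampling scheme seems hopeless. Restricting attention to one dyadic bucket at a time and sampling only within that bucket is the trick that makes those outside-bucket extras irrelevant to the ORS being constructed. The $1/\alpha^2$ factor itself reflects a two-step trade-off in $q = \Theta(\alpha^2/D)$: one $\alpha$ is consumed making expected extras small enough that most sampled indices clean up to size $(1-\alpha)r$, and a second $\alpha$ is lost in the final inequality because $q$ itself must scale with $\alpha^2$.
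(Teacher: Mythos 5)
Your proof is correct and achieves the same bound, via the same high-level strategy as the paper: dyadically bucket the indices by $d_i$, subsample independently within a bucket, clean up the sample into an ORS graph, and charge. But your within-bucket argument is genuinely cleaner than the paper's. The paper first proves a standalone auxiliary claim (its Claim with a uniform cap $d_i \le d$) and then applies it to the bucket $P_k$ using the \emph{restricted} degrees $d'_j := |E[P_k] \cap V^2(M_j)|/|M_j|$, which can range all the way from $1$ up to $2^k$ even though the original $d_j$'s are tightly bucketed. Because the paper then samples $M_i$ with the non-uniform probability $\delta/(2 d'_i)$, the per-edge coverage quantity $x_e = \sum_{j>i} 1/d'_j$ can blow up, so the paper needs a preliminary pruning step (its ``Step~1,'' discarding edges with $x_e > 1/\epsilon$ and the matchings that lose too many such edges) before Markov can be applied. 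You sidestep this entirely: by sampling uniformly at rate $q = \Theta(\alpha^2/2^k)$ and bounding the within-bucket candidate extras by the \emph{original} $d_i |M_i| \le 2 \cdot 2^k |M_i|$ (valid since the within-bucket candidates are a subset of $E[G_i]\cap V^2(M_i)$), the expected number of extras is $O(\alpha^2|M_i|)$ in one shot, with no pre-pruning needed. A secondary cosmetic difference is which side of a potential inducedness violation you repair: the paper deletes from $M_i$ its edges that lie inside $V(M_j)$ for later sampled $j$, whereas you shrink $V(M_i)$ so that edges of earlier sampled matchings no longer have both endpoints inside it; both are correct readings of the same constraint. Net effect: you prove the same per-bucket bound $|B_k| \le O(2^k\,\ORS_n(r(1-\alpha))/\alpha^2)$ with one fewer moving part, at the cost of being slightly less modular (no reusable standalone claim).
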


Let us first prove \cref{lem:runtime} via \cref{clm:log n ORS}. The proof of \cref{clm:log n ORS} is presented afterwards.

\begin{proof}[Proof of \cref{lem:runtime}]

All the steps of \cref{alg:additive} can be implemented in $\widetilde{O}(1)$ time per update except for \cref{line:phase-end} where we spend $\widetilde{O}(t)$ time and \cref{line:callMAC} where we call \textsc{MatchAndCertify}. Since we call \cref{line:phase-end} every $t$ updates, the amortized cost of this line is still $\widetilde{O}(1)$. It remains to analyze the cost of \cref{line:callMAC}. Note that \cref{line:callMAC} is called every $\epsilon n/2$ updates, so its cost will have to be amortized over these $\epsilon n/2$ updates.

Let us first discuss the time needed to prepare the inputs to $\textsc{MatchAndCertify}$. This input consists of graphs $\Edense = G - \Gadd$ and $\Esparse = \Gadd \cup \Hcert - \Gdel$. For $\Edense$ we only need to provide adjacency matrix query access. Since each such query can be answered by making one adjacency matrix query to $G$ and one to $\Gadd$, we do not need to spend any time preparing $\Edense$. For $\Esparse$, we go over the edges of $\Gadd, \Hcert, \Gdel$ one by one, but since these graphs are reset after $t$ updates, they only contain at most $t$ edges and the time spent is $O(t)$ which amortized over $\epsilon n$ updates adds up to $O_\epsilon(t/n)$.

Now, let us compute the running time of the algorithm based on the total calls to the \textsc{MatchAndCertify}. Each call to this algorithm, by \cref{lem:nd-boosted} satisfies one of the two conditions \ref{itm:c1} and \ref{itm:c2}. Each call that satisfies \ref{itm:c1} runs in $\widetilde{O}_\epsilon(n + |\Esparse|)$ time. Each call that satisfies \ref{itm:c2} runs in $\widetilde{O}_\epsilon(|\Esparse|+n^{2+\epsilon}/d)$ time. Let us for now ignore the second term in the latter running time. The rest of the computation, amortized over $\epsilon n$ updates, sums up to $\widetilde{O}_\epsilon(|\Esparse|/n) = \widetilde{O}_\epsilon(t/n)$.

Let us define a {\em phase} of the algorithm to be $t$ consecutive updates while $\cupdates < t$. For analyzing the second term in the running time of the calls to \textsc{MatchAndCertify} that satisfy condition \ref{itm:c2} of \cref{lem:nd-boosted}, we analyze their total cost over the whole phase and amortize this over $t$. Note that this is very different from our analysis before that amortizes the cost over a shorter sequence of $\epsilon n$ updates.

Take the $i$-th call to \cref{lem:nd-boosted} within the phase that satisfies \ref{itm:c2} and let $M_{C_i}$ be its certificate, which is a matching in $\Edense$ and let $d_i := |\Edense \cap V^2(M_{C_i})|/|M_{C_i}|$ be the average degree of $M_{C_i}$ in $\Edense$ the moment that it is returned. Note that the total cost the part of the algorithm that we ignored above can be written as
\begin{equation}\label{eq:bhgdpc-8213}
    \sum_{i} \widetilde{O}_\epsilon(n^{2+\epsilon}/d_i) = \widetilde{O}_\epsilon(n^{2+\epsilon}) \cdot \sum_{i} 1/d_i.
\end{equation}
In what follows, we upper bound the sum $\sum_i 1/d_i$.

Note that the set $\Edense$ used in the definition of $d_i$ does not remain the same throughout a phase, and particularly differs for each $d_i$. But here is the crucial observation. Since $\Edense = G - \Gadd$, the set $\Edense$ is \textbf{decremental} throughout the phase. This is because every edge inserted is inserted to both $G$ and $\Gadd$, thus $G - \Gadd$ remains unchanged, while deletions from $G$ will be deleted from $\Edense$ as well.

Now define $H_{\geq i} := M_{C_i} \cup M_{C_{i+1}} \cup \ldots \cup M_{C_k}$. From our earlier discussion that the $M_{C_i}$'s are matchings in a decremental graph, all edges in $H_{\geq i}$ were present when certificate $M_{C_i}$ was outputted by \cref{lem:nd-boosted}. Hence, for the average degree $d'_{\geq i}$ of $M_{C_i}$ in $H_{\geq i}$ we have
\begin{equation}\label{eq:hg-h23h}
d'_{\geq i} := \frac{|E(H_{\geq i}) \cap V(M_{C_i})^2|}{|M_{C_i}|} \leq \frac{|\Edense \cap V(M_{C_i})^2|}{|M_{C_i}|} = d_i,
\end{equation}
where here $\Edense$ is the input to \cref{lem:nd-boosted} at the step where certificate $M_{C_i}$ was returned. Therefore, applying \cref{clm:log n ORS} on the sequence $M_{C_k}, M_{C_{k-1}}, \ldots, M_{C_1}$ (note the reverse order), we have
$$
    \sum_{i=1}^k \frac{1}{d_i} \stackrel{\cref{eq:hg-h23h}}{\leq} \sum_{i=1}^k \frac{1}{d'_{\geq i}} = \widetilde{O}(\ORS_n(\Theta_\epsilon(n))).
$$
It should be noted that our \cref{clm:log n ORS} requires the matchings in the sequence to be edge-disjoint. This is indeed satisfied by our algorithm, since once a certificate $M_{C_i}$ is returned, we add its edges to $\Hcert$ and these edges will never belong to $\Edense$ for the rest of the phase.

Plugged into \cref{eq:bhgdpc-8213}, this bounds the total (remaining) running time of a phase by $$n^{2+\epsilon} \cdot \widetilde{O}(\ORS_n(\Theta_\epsilon(n)).$$ Amortized over $t$, the number of updates of the phase, this sums up to 
$$
\frac{n^{2+\epsilon} \cdot \widetilde{O}(\ORS_n(\Theta_\epsilon(n))}{t}.
$$ 
Combined with the earlier bound, we arrive at the claimed amortized update time of
$$\widetilde{O}_\epsilon\left(\frac{t}{n} + \frac{n^{2+\epsilon} \cdot \ORS_n(\Theta_\epsilon(n))}{t}\right).\qedhere
$$

\end{proof}

We now turn to prove \cref{clm:log n ORS}. First we prove the following auxiliary claim that helps with the proof of \cref{clm:log n ORS}.

\begin{claim}\label{clm: ORS d}
 Let $G=(V, E = M_1 \cup \ldots \cup M_t)$ be an $n$-vertex graph where $M_1, \ldots, M_t$ are edge-disjoint matchings of size at least $r$ each. For any $i \in [t]$, define $G_i = (V, E_1 \cup \ldots \cup E_i)$ and let $d_i$ be $|E[G_i] \cap V^2(M_i)|/|M_i|$. Let $d$ be an upper bound on $d_i$ for all $i \in [t]$.  Then for any $\alpha \in (0,1)$, it holds that $$t = O\left(\frac{d}{\alpha^2}\cdot\ORS_n(r(1-\alpha))\right).$$     
\end{claim}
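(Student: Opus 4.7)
The plan is to prove the bound via the probabilistic method: randomly subsample the given matching sequence, then locally repair each sampled matching to eliminate obstructions to the induced-matching condition, and argue that an $\Omega(\alpha/d)$ fraction of the sampled matchings survive to witness an ORS graph on $V$.

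First, I would pick $I \subseteq [t]$ by independently including each index with probability $q := \alpha/(2d)$. Processing the indices of $I$ in increasing order, I would build $M_i^\star \subseteq M_i$ as follows: initialize $M_i^\star = M_i$, and for every edge $(u,v) \in \bigcup_{j<i,\, j\in I} M_j^\star$ with $u,v \in V(M_i)$---a ``violation edge'' for $i$---drop the $M_i^\star$-edge incident to $u$ (if it is still present). If $v_i$ denotes the total count of violation edges for $i$, this yields $|M_i^\star| \geq |M_i| - v_i$, and by construction $M_i^\star$ is an induced matching in the prefix $\bigcup_{j\leq i,\, j\in I} M_j^\star$.

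The key quantitative step is the expectation bound on $v_i$. The number of potential violation edges is $|E(G_{i-1}) \cap V^2(M_i)| = (d_i-1)|M_i| \leq d\,|M_i|$, by definition of $d_i$ (using that the $M_i$'s are edge-disjoint) and the hypothesis $d_i \leq d$. Conditional on $i \in I$, each potential violation edge belongs to $\bigcup_{j<i,\, j \in I} M_j$ with probability $q$, so $\E[v_i \mid i \in I] \leq q d\,|M_i|$; passing to $M_j^\star \subseteq M_j$ only reduces this count. Markov's inequality then gives $\Pr[v_i < \alpha |M_i| \mid i \in I] \geq 1 - qd/\alpha = 1/2$. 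Call $i$ \emph{good} if $i \in I$ and $v_i < \alpha |M_i|$; for such $i$, $|M_i^\star| > (1-\alpha)|M_i| \geq r(1-\alpha)$.

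The expected number of good indices is $\geq qt/2 = \alpha t/(4d)$, so some realization of $I$ attains at least this count. Trimming each good $M_i^\star$ down to exactly $\lceil r(1-\alpha)\rceil$ edges (which preserves inducedness, since trimming only shrinks the host edge set), these matchings in their original order exhibit an $\ORS_n(\lceil r(1-\alpha)\rceil,\ \Omega(\alpha t/d))$ graph on $V$. Rearranging yields $t = O\bigl((d/\alpha)\cdot\ORS_n(r(1-\alpha))\bigr) \leq O\bigl((d/\alpha^2) \cdot \ORS_n(r(1-\alpha))\bigr)$ since $\alpha < 1$. The main subtlety to verify will be that the sequential pruning really does yield a valid ORS: when we modify $M_i^\star$ at step $i$, the previously established induced conditions for each $M_{i'}^\star$ with $i' < i$ must remain intact---this should follow because $V(M_{i'}^\star)$ is frozen once step $i'$ completes, and removing edges from $M_i^\star$ only shrinks the host edge set $\bigcup_j M_j^\star$, so no new violation edges can appear anywhere in the construction.
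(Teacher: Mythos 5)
Your proof is correct, and it shares the paper's overall blueprint for \cref{clm: ORS d}: randomly subsample the matchings at rate $\Theta(\alpha/d)$, prune a few edges to restore the inducedness constraints, and keep every matching that retains a $(1-\alpha)$ fraction of its edges. The one substantive difference is the \emph{direction} of the repair, and it genuinely streamlines the argument. The paper fixes a violation---an edge $e\in M_j$ with both endpoints in $V(M_i)$ for some $j<i$---by deleting $e$ from the \emph{earlier} matching $M_j$; the number of such deletions charged to $M_j$ is then governed by how many \emph{later} sampled matchings cover both endpoints of each of its edges, a quantity not controlled by $d_j$ alone, so the paper needs a preliminary filtering step (the potential $x_e=\sum_{j'>j}1/d_{j'}$ and the set $B_1$ of edges with $x_e>1/\epsilon$) before sampling, and this costs an extra factor of $\alpha$ in the sampling rate (their $\delta=\epsilon^2$). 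You instead charge the repair to the \emph{later} matching $M_i$ whose inducedness is at stake, dropping one $M_i$-edge per violating earlier edge; the number of violations is at most $|E(G_{i-1})\cap V^2(M_i)|=(d_i-1)|M_i|\le d|M_i|$ by the very definition of $d_i$, so a single Markov step suffices and you can sample at rate $\alpha/(2d)$. This yields $t=O\bigl((d/\alpha)\cdot\ORS_n(r(1-\alpha))\bigr)$, a factor $1/\alpha$ better than claimed. The two subtleties you flag are both handled correctly: pruning only shrinks vertex and edge sets, so no inducedness condition, past or future, is broken retroactively, and conditioning on $i\in I$ is harmless by independence of the sampling coins.
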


\begin{proof}
    Let $\epsilon = \alpha/3$ and $\delta = \epsilon^2$. Let $H = M_1 \cup ... \cup M_{t}$. 
    Consider the following process for constructing a graph $H_3$, that we eventually show is an appropriate ORS graph.
    
    \begin{itemize}[leftmargin=10pt]
         \item \textbf{Step 1.} Let us define $x_e$ for any edge $e = (u,v)$ appearing in matching $M_{i}$ as follows 
        $$x_e := \sum_{ j > i: u,v \in V^2(M_j)}  \frac{1}{d_j}.$$
        Define $B_1 = \{e \mid x_e > 1/\epsilon \}$.
   
        Let $H_1 \subseteq H$ include a matching $M_i \in H$ iff no more than $2\epsilon r$ edges of $M_i$ belong to $B_1$. For each matching $M_i \in H_1$, we then set $M_i \gets M_i - B_1$. Observe that $|M_i| \geq (1-2\epsilon)r$ for any $M_i \in H_1$.

        \item \textbf{Step 2.} 

        Let $H_2 \subseteq H_1$ include every matching $M_i \in H_1$ independently with probability $\frac{\delta}{2d_i}$.

        \item \textbf{Step 3.} For any $i$, call an edge $e \in M_i \in H_2$ {\em bad} if $$e \in \bigcup_{j>i: M_j \in H_2} V^2(M_j).$$ 
        Let $B_2$ be the set of bad edges. Let $H_3 \subseteq H_2$ include matching $M_i \in H_2$ iff there are at least $(1-\epsilon)|M_i|$ edges in $M_i$ that do not belong to $B_2$. We then set $M_i \gets M_i - B_2$ for any $M_i \in H_3$. Observe that $|M_i| \geq (1-\epsilon)(1-2\epsilon)r \geq (1-3\epsilon)r$ for any $M_i \in H_3$.

    \end{itemize}

    From the construction above, it can be verified that $H_3$ is an ORS graph on matchings of size at least $(1-3\epsilon)r$. What remains, is to lower bound the number of matchings that survive in $H_3$. This is what the rest of the proof focuses on.

        First, we prove that $|B_1| \leq \epsilon|E|$. Suppose for the sake of contradiction that $|B_1| > \epsilon|E|$; then
        $$\sum_ {e \in B_1} x_e > |B_1|/\epsilon > |E|.$$
        On the other hand, we have
        $$
            \sum_{e \in B_1} x_e \leq \sum_{e \in E} x_e = \sum_{i \in [t]} \frac{|E[G_i] \cap V^2(M_i)|}{d} = \sum_{i \in [t]} \frac{d_i|M_i|}{d} \leq \sum_{i \in [t]} |M_i|  = |E|,
        $$
        which contradicts the inequality above. Here, the first equality follows by double counting: say matching $M_i$ ``contributes'' a value of $1/d$ to edge $e$ if $e \in E[G_i] \cap V^2(M_i)$; then $x_e$ can be verified to be the total contribution of all matchings to $e$. The second equality follows from the definition of $d_i$. The inequality after that follows from $d_i \leq d$, which is by the assumption of the lemma.

        Now that we know $|B_1| \leq \epsilon |E|$, we get that 
        \begin{equation}\label{eq:h1large}
            |H_1| \geq t/2.
        \end{equation}
        Otherwise, more than $t/2$ matchings have more than $2 \epsilon r$ edges in $B_1$, meaning that $|B_1| > \frac{t}{2} \cdot 2\epsilon r = \epsilon r t = \epsilon |E|$, a contradiction.

        Let us now fix a matching $M_i \in H_1$ and fix an edge $e=(u, v) \in M_i$. We have
        $$
            \Pr[e \in B_2 \mid M_i \in H_2] \leq \sum_{j > i: u,v \in V^2(M_j )}  \frac{\delta}{2d_j} = \delta x_e/2 \leq \frac{\delta}{2\epsilon} = \epsilon/2,
        $$
        where the last inequality follows because $x_e \leq 1/\epsilon$ for every edge in $M_i \in H_1$. By linearity of expectation, we have
        $$
        \E[|M_i \cap B_2| \mid M_i \in H_2] = \sum_{e \in M_i} \Pr[e \in B_2 \mid M_i \in H_2] \leq \frac{\epsilon}{2} |M_i|.
        $$
        Applying Markov's inequality, we thus have
        $$
        \Pr\big[|M_i \cap B_2| \geq \epsilon |M_i| \,\,\,\big\vert\,\,\, M_i \in H_2 \big] \leq 1/2.
        $$
        Therefore,
        $$
            \Pr[M_i \not\in H_3 \mid M_i \in H_2] \leq 1/2.
        $$
        This implies that
        $$
            \E[|H_3|] \geq \E[|H_2|]/2 = \frac{1}{2} \sum_{M_i \in H_1} \frac{\delta}{2d_i} = \frac{\delta}{4} \sum_{M_i \in H_1} \frac{1}{d_i} \geq \frac{\delta}{4} \sum_{M_i \in H_1} \frac{1}{d} = \frac{\delta |H_1|}{4d} \stackrel{\cref{eq:h1large}}{\geq} \frac{\delta t}{8d} = \Omega(\epsilon^2 t / d).
        $$

        Note that for each matching $M_i \in H_3$, we have $|M_i| \geq (1-\epsilon)(1-2\epsilon)r \geq (1-3\epsilon)r$ by construction. Additionally, $M_i$ is an induced matching in $\bigcup_{j<i; M_j \in H_3} M_j$, again, by construction. Therefore, $H_3$ is an ORS graph with matchings of size  $(1-3\epsilon)r$.  This implies that
        $$
        \ORS_n((1-3\epsilon)r) \geq  E[|H_3|] \geq \Omega(\epsilon^2 t /d).$$
        Moving the terms and recalling that $\epsilon = \alpha/3$, we get that
        $$
        t = O\left(\frac{d}{\alpha^2}  \cdot \ORS_n((1-\alpha)r) \right). \qedhere
        $$
\end{proof}

Now, we are ready to prove \cref{clm:log n ORS}.

\begin{proof}[Proof of \cref{clm:log n ORS}]
Given $M_1, \dots M_t$, we partition the matchings to $\log n$ groups $P_1, \dots ,P_{\log n}$ where $M_j$ is in the $i$-th group if $2^{i-1} \leq d_j <2^i$. Define $d'_j := |E[P_i] \cap V^2(M_j)|/|M_j|$.  Note that by definition for any matching $M_j \in P_i$ we have $d'_j \leq d_j \leq 2^i$. Therefore, by \cref{clm: ORS d} we have 
\begin{align*}
\sum_{i \in [\log n]} \sum_{j: M_j \in P_i} 1/d_j & \leq \sum_{i \in [\log n]} |P_i| /2^{i-1}  \\ & \leq O\left(\frac{1}{\alpha^2}\ORS_n((1-\alpha)r)\right) \cdot \sum_{i \in [\log n]} 2^i /2^{i-1} \tag{Applying \cref{clm: ORS d} with $d = 2^i$, we get $|P_i| \leq O\left(\frac{1}{\alpha^2}\ORS_n((1-\alpha)r) 2^i \right)$ which is plugged here.} \\&  = O\left(\frac{1}{\alpha^2}\ORS_n((1-\alpha)r) \cdot \log n\right).
\end{align*}
The proof is thus complete.
\end{proof}


\section{Bounding Density of (Ordered) Ruzsa-Szemer\'edi Graphs}\label{sec:ub}

In this section we prove upper bounds on the density of ORS graphs. Our main result of this section is a proof of \cref{thm:ORS-ub} which we present in \cref{sec:linear-matchings}. We then show that a much stronger upper bound can be proved if matchings cover more than half of vertices in \cref{sec:large-matchings}.

Before presenting our proof, let us first briefly discuss how the triangle removal lemma is useful for upper bounding density of RS graphs and why it does not seem to help for upper bounding ORS. The triangle removal lemma states that so long as the number of triangles in a graph are $o(n^3)$, then we can make the graph triangle free by removing $o(n^2)$ of its edges. Suppose we have a bipartite RS graph with induced matchings $M_1, \ldots, M_k$. Add $k$ vertices  $v_1, \ldots, v_k$ to this graph and connect each $v_i$ to all the vertices matched by $M_i$. Now, crucially, because each $M_i$ is an induced matching, each $v_i$ is part of exactly $|M_i|$ triangles, one for each edge of $M_i$. As such, the total number of triangles can be upper bounded by $O(kn) = O(n^2)$ which is well within the regime that one can apply triangle removal lemma. However, because the matchings in ORS are not induced matchings of the whole graph, the number of triangles cannot simply be upper bounded by $O(n^2)$ after adding the auxiliary vertices. In fact, our upper bound completely deviates from this approach and does not rely on the triangle removal lemma.

\subsection{Linear Matchings}\label{sec:linear-matchings}

The following lemma, which is one of our main tools in proving \cref{thm:ORS-ub}, shows that so long as vertex degrees are larger than a threshold, we can reduce the number of vertices rather significantly without hurting the size of a relatively large number of induced matchings. Our proof of \cref{lem:balanced-degrees} builds on the techniques developed for upper bounding RS graphs with matchings of size very close to $n/4$ in \cite{fox2015graphs} that we extend to ORS graphs with balanced degrees.

\begin{lemma}\label{lem:balanced-degrees}
    Let $G$ be an $\ORS_n(r,t)$ graph with $r =cn$ with even $t$. Let $M_1, \ldots, M_t$ be the $t$ matchings in $G$ as defined in \cref{def:ORS}. Let us partition these matchings into two subsets 
    $$
    \mc{M} = \{M_1, \ldots, M_{t/2}\} \qquad \text{and} \qquad \mc{M}' = \{M_{t/2 + 1}, \ldots, M_t\}.$$
    Suppose that for every vertex $v$, it holds that $\deg_{\mc{M}'}(v) \geq \delta ct$ for some $\delta > 0$. Then there exists an $\ORS_{n'}(cn, t')$ graph $H$ on $n' < (1-\delta c^2/2)n$ vertices where $t' \geq \delta c^2 t / (8 \cdot 2^x)$ and $x = 4n/ct$.
\end{lemma}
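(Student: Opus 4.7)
The central observation is an immediate consequence of the induced-matching structure of $\mc{M}'$: if a vertex $v$ lies in $V(M_i)$ for some $M_i \in \mc{M}'$, then since $M_i$ is induced in $M_1 \cup \cdots \cup M_i \supseteq \bigcup_{M \in \mc{M}} M$, no $\mc{M}$-edge can have both endpoints in $V(M_i)$ (any such edge would contradict inducedness of $M_i$). In particular $N_{\mc{M}}(v) \cap V(M_i) = \emptyset$, so if we select $v$ as a pivot and delete $N_{\mc{M}}(v)$ from the vertex set, every matching of $\mc{M}'$ containing $v$ retains all $2cn$ of its endpoints and survives intact.

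The plan is to iterate this pivoting. Start with $V_0 = V$ and $\mc{T}_0 = \mc{M}'$; at step $i$ choose a pivot $v_i \in V_i$, set $V_{i+1} := V_i \setminus N_{\mc{M}}(v_i)$ and $\mc{T}_{i+1} := \{M \in \mc{T}_i : V(M) \subseteq V_{i+1}\}$. The observation ensures that $\mc{T}_{i+1}$ contains every $M \in \mc{T}_i$ incident to $v_i$. Repeat for some number of steps $k$, calibrated so the cumulative vertex reduction crosses the threshold $\delta c^2 n/2$, and output $H$ as the subgraph induced on the final $V_k$, equipped with the surviving matchings $\mc{T}_k$ in their original order. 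Since we only delete vertices and take an order-preserving subfamily of matchings, every $M \in \mc{T}_k$ is still induced in the union of its predecessors in $\mc{T}_k$ (in the original graph it was induced in an even larger union), so $H$ is a valid $\ORS_{|V_k|}(cn, |\mc{T}_k|)$ graph.

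It remains to verify the two quantitative bounds. For $|V_k| \leq (1 - \delta c^2/2)n$, pivots are picked to have above-average $\mc{M}$-degree in the current $V_i$; the global identity $\sum_v \deg_{\mc{M}}(v) = c t n$ means a typical vertex has $\mc{M}$-degree $\approx ct$, and the reduction accumulates to the target within about $x = 4n/(ct)$ iterations. For $|\mc{T}_k| \geq \delta c^2 t / (8 \cdot 2^x)$, the key input is the hypothesis $\deg_{\mc{M}'}(v) \geq \delta c t$: combined with the endpoint identity $\sum_{v \in V_i} \deg_{\mc{T}_i}(v) = 2cn \cdot |\mc{T}_i|$, a careful pivot choice guarantees $|\mc{T}_{i+1}| \geq |\mc{T}_i|/2$ at each step (up to the multiplicative slack of $\Theta(\delta c^2)$ absorbed into the $\delta c^2/8$ factor), so after at most $x$ steps $|\mc{T}_k|$ has dropped by at most a factor of $2^x$ from the initial $t/2$.

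The main obstacle is satisfying both requirements \emph{simultaneously}: each pivot must be $\mc{M}$-heavy enough that the vertex reduction reaches $\delta c^2 n/2$ within the allotted $x = 4n/(ct)$ iterations, while being $\mc{T}_i$-heavy enough that the matching set shrinks by at most a factor of $2$ per step. Establishing the existence of such dual-purpose pivots --- and thereby extracting the precise exponent $x = 4n/(ct)$ rather than a larger one --- is the technical heart of the argument, and is where the hypothesis $\deg_{\mc{M}'}(v) \geq \delta c t$ is used in an essential way.
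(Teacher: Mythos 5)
Your opening observation is exactly the paper's key structural fact (its \cref{obs-main}): if $v$ is matched by $M'\in\mc{M}'$, then deleting $N_{\mc{M}}(v)$ removes no vertex of $M'$. But the iteration you build on top of it has a genuine gap, and it is precisely the step you yourself flag as "the technical heart": the claim that a pivot can be chosen so that $|\mc{T}_{i+1}|\geq|\mc{T}_i|/2$. In your scheme the only matchings guaranteed to survive step $i$ are those containing $v_i$ (matchings avoiding $v_i$ may still lose vertices to $N_{\mc{M}}(v_i)$), and the hypothesis only gives $\deg_{\mc{M}'}(v)\geq\delta ct$, i.e.\ a $2\delta c$ fraction of the initial $t/2$ matchings --- nowhere near half, and with no control in later rounds. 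Compounding such losses over $x$ rounds does not yield the claimed $2^{-x}$ factor, and there is no argument for the existence of "dual-purpose" pivots that are simultaneously $\mc{M}$-heavy and contained in half the surviving matchings. So the proposal is not a proof.

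The paper avoids this tension by decoupling the two requirements. Pivots $v_1,\ldots,v_k$ are chosen greedily by maximum remaining $\mc{M}$-degree, stopping once neighborhoods fall below $n/x$; this forces $k\leq x$ and $\sum_i|N_i|^2\geq tcn/2-n^2/x$. Crucially, the deletion is then made \emph{matching-dependent}: for each $M\in\mc{M}'$ one removes only $\bigcup_{v_i\in P\cap V(M)}N_i$, so by the observation $M$ survives in full on its own reduced vertex set $Y_M$. The degree hypothesis enters only to lower-bound $\Pr_{M}[v_i\in V(M)]=2\deg_{\mc{M}'}(v_i)/t$ in the computation of $\E_M[|Y_M|]\leq(1-\delta c^2)n$, and Markov's inequality yields a $\delta c^2/4$ fraction of "good" matchings with $|Y_M|<(1-\delta c^2/2)n$. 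The $2^x$ loss then comes not from per-step attrition but from a single pigeonhole: good matchings are grouped into $2^{|P|}\leq 2^x$ classes according to which subset of pivots they match, matchings in the same class share the same $Y_M$, and the largest class gives the ORS graph $H$. You would need to replace your sequential intersection argument with something of this form (or otherwise justify the per-step survival rate) for the proof to go through.
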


Let us start with an observation that is extremely helpful in proving \cref{lem:balanced-degrees}.

\begin{observation}\label{obs-main}
    Take a matching $M' \in \mathcal{M'}$ and take any vertex $v$ matched by $M'$. Let $N_{\mc{M}}(v)$ be the set of neighbors of $v$ in $\mc{M}$. That is, $u \in N_{\mc{M}}(v)$ iff there is $M \in \mc{M}$ such that $(v, u) \in M$. Then no vertex in $N_{\mc{M}}(v)$ can be matched in $M'$.
\end{observation}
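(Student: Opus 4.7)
The plan is a short proof by contradiction that exploits the two defining properties of the ordered matchings $M_1,\ldots,M_t$: edge-disjointness, and the fact that each $M_i$ is an induced matching in $M_1\cup\cdots\cup M_i$.

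First I would write $M' = M_j$ for some $j > t/2$, so that every $M \in \mc{M}$ appears strictly earlier in the ordering than $M'$. Then I would fix the vertex $v$ matched by $M'$ and pick an arbitrary $u \in N_{\mc{M}}(v)$, which by definition comes with some $M \in \mc{M}$ satisfying $(u,v) \in M$. In particular, the edge $(u,v)$ belongs to $M_1 \cup \cdots \cup M_j$. Suppose for contradiction that $u$ is also matched by $M'$; I want to derive a violation of the induced matching condition for $M_j$.

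The key step is to observe that $(u,v) \notin M'$. This follows because $(u,v) \in M$ and the matchings in the ORS decomposition are edge-disjoint (Definition~\ref{def:ORS}), so $(u,v)$ cannot simultaneously lie in $M'$. Consequently, $v$ is matched by $M'$ to some $v' \neq u$ and $u$ is matched by $M'$ to some $u' \neq v$. Now both endpoints of the edge $(u,v)$ are saturated by $M'$, yet $(u,v)$ itself is an edge of $M_1 \cup \cdots \cup M_j$ that does not lie in $M'$. This directly contradicts the requirement that $M'$ is an induced matching inside $M_1 \cup \cdots \cup M_j$, completing the proof.

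No real obstacle is anticipated here: the observation is essentially just unpacking the definition of ordered induced matching combined with edge-disjointness. The only subtlety worth being explicit about is the edge-disjointness step, since without it one could imagine $(u,v)$ itself being the matching edge of $M'$ incident to $v$, which would make $u = v'$ and trivially allow $u$ to be ``matched by $M'$'' without any contradiction.
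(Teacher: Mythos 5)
Your proof is correct and follows essentially the same route as the paper's: suppose some $u\in N_{\mc{M}}(v)$ is matched by $M'$, use edge-disjointness to rule out that $(u,v)$ is itself the $M'$-edge at $v$ (or at $u$), and then observe that both endpoints of the earlier edge $(u,v)$ are saturated by $M'$, violating the ordered induced-matching condition of Definition~\ref{def:ORS}. The only cosmetic difference is that you phrase the contradiction as ``an edge with both endpoints in $V(M')$ that is not in $M'$'' while the paper phrases it via the two $M'$-edges $(u,w)$ and $(v,y)$; these are the same argument.
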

\begin{proof}
    Suppose there is $u \in N_{\mc{M}}(v)$ that is matched by $M'$. Let $(u, w) \in M'$ be the matching edge involving $u$. Also take the edge $(v, y) \in M'$ that involves $v$ (which exists by definition of $v$). Note that since the matchings $M_1, \ldots, M_t$ are edge-disjoint by \cref{def:ORS}, $v$ and $u$ cannot be matched together in $M'$ (as $v$ and $u$ must be matched in some other matching in $\mc{M}$ by definition of $N_\mc{M}(v)$). Now since $(u, w), (v, y) \in M'$ but $(v, u)$ belongs to some matching in $\mc{M}$ that comes before $M'$ in the ordering, matching $M'$ cannot be an induced matching among its previous matchings,  contradicting \cref{def:ORS} for ORS graphs. 
\end{proof}

We are now ready to prove \cref{lem:balanced-degrees}.

\begin{proof}[Proof of \cref{lem:balanced-degrees}]
    We explain the proof in a few steps.
    
    \paragraph{Step 1: The pivots.} We iteratively take a vertex of highest remaining degree in $\mathcal{M}$, and remove its neighbors in $\mathcal{M}$ from the graph.  More formally, take the graph $G_0 = (V_0, E_0)$ where $V_0 = V(G)$ and $E_0$ is the union of the matchings in $\mc{M}$. At step $i$, we choose $v_i$ in $V_{i-1}$ with the maximum degree in $G_{i-1}$. Let us define $N_i$ as the neighbors of $v_i$ in $G_{i-1}$. We then remove $v_i$ and its neighbors in $G_{i-1}$ to define $G_i=(V_i, E_i)$. Namely, $V_i = V_{i-1} \setminus (\{v_i \} \cup N_i)$ and $E_i$ includes all edges in $E_{i-1}$ except those that have at least one endpoint that does not belong to $V_i$.
    
    Let $k$ be the maximum integer such that $|N_k| > n/ x$. We define $P = \{v_1, \ldots, v_k\}$ to be the set of {\em pivots}. Note that since $N_1, \dots, N_k$ are disjoint sets, we get that $ n \geq \sum_ {i=1} ^k |N_i| \geq kn/x$ which implies that 
    \begin{equation}
        |P| \leq x.
    \end{equation}
    
    Moreover, note that when removing $v_i$, we remove $|N_i|$ vertices from the graph and each of those vertices has remaining degree at most $|N_i|$ (as we choose $v_i$ to be the vertex of highest remaining degree). In other words, we remove at most $|N_i|^2$ from the graph after removing $v_i$ and its remaining neighbors. On the other hand, after removing $v_1, \ldots, v_k$ and their neighbors, the maximum degree in the graph is at most $n/x$ (by definition of $k$), and so there are at most $n^2/x$ edges in the graph. This implies that:
    $$
        |E_0| - \sum_{i=1}^k |N_i|^2 \leq n^2/x.
    $$
    Given that $E_0$ includes $t/2$ edge disjoint matchings of size $cn$, we get that $|E_0| \geq tcn/2$. Plugged into the inequality above, this implies that
    \begin{equation}\label{eq:Nisquared}
        \sum_{i=1}^k |N_i|^2 \geq |E_0| - n^2/x \geq tcn/2 - n^2/x.
    \end{equation}

    \paragraph{Step 2: Vertex reduction.}  Take a matching $M \in \mathcal{M}'$. Let $S_M$ be the set of pivots matched by $M$, i.e., $S_M = P \cap V(M)$. Let us define $Y_M := V \setminus \cup_{v_i \in S} N_i$. From \cref{obs-main}, we get that all edges of $M$ must have both endpoints still in $Y_M$. Suppose for now that $M \in \mathcal{M}'$ is chosen uniformly at random. We first argue that the expected size of $|Y_M|$ is relatively small (despite it preserving all edges of $M$ completely). We have
    \begin{flalign}
        \nonumber \E_{M \sim \mc{M}'}[|Y_M|] &= n - \sum_{i=1}^k \Pr[v_i \in V(M)] \cdot |N_i|\\
        \nonumber &= n - \sum_{i=1}^k \frac{\deg_{\mc{M}'}(v_i)}{|\mc{M}'|} \cdot |N_i| \tag{As $M$ is chosen uniformly from $\mc{M}'$ and $v_i$ is matched in $\deg_{\mc{M}'}(v_i)$ of matchings in $\mc{M}'$.}\\
        \nonumber &= n - \sum_{i=1}^k \frac{2\deg_{\mc{M}'}(v_i)}{t} \cdot |N_i| \tag{Since $|\mc{M}'| = t/2$.}\\
        \nonumber &\leq n - \sum_{i=1}^k \frac{2\delta c t}{t} \cdot |N_i| \tag{Since $\deg_{\mc{M}'}(v_i) \geq \delta c t$ as assumed in \cref{lem:balanced-degrees}.}\\
        \nonumber &\leq n - \sum_{i=1}^k \frac{4\delta c}{t} \cdot |N_i|^2. \tag{Since $|N_i| \leq t/2$ as $\mc{M}$ is the union of $t/2$ matchings.}\\
        \nonumber &\leq n - \frac{4\delta c}{t} \cdot (tcn/2 - n^2/x) \tag{By \cref{eq:Nisquared}.}\\
        \nonumber &= n - 2 \delta c^2 n + \frac{4\delta c n^2}{tx}\\
        \nonumber &\leq n - 2\delta c^2 n + \frac{4\delta c n^2}{(4n/cx)x} \tag{Since $t = 4n/cx$ by definition of $x$ in \cref{lem:balanced-degrees}.}\\
        \nonumber &= n - 2\delta c^2 n + \delta c^2 n\\
        &= (1 - \delta c^2)n.\label{eq:exp-YM-ub}
    \end{flalign}

    Instead of the expected value of $Y_M$, we need some (weak) concentration. Take $\epsilon = \delta c^2/2$ and note from \cref{eq:exp-YM-ub} that $(1+\epsilon)\E[|Y_M|] < (1+\epsilon)(1-\delta c^2)n < (1-\delta c^2/2) n$. We have
    \begin{flalign}
        \nonumber \Pr_{M \sim \mc{M}'}\Big[|Y_M| < (1+\epsilon)\E[|Y_M|] < (1-\delta c^2/2)n \Big] &= 1 - \Pr_{M \sim \mc{M}'}\Big[|Y_M| > (1+\epsilon)\E[|Y_M|]\Big] \\
        \nonumber &\geq 1 - \frac{1}{1+\epsilon}\tag{By Markov's inequality.}\\
        &\geq \epsilon/2 = \delta c^2/4.\label{eq:markov-YM}
    \end{flalign}
    
    We call a matching $M \in \mc{M}'$ a good matching if $|Y_M| < (1-\delta c^2 / 2)n$. From \cref{eq:markov-YM} we get that at least $(\delta c^2 / 4) |\mc{M}'| = \delta c^2 t / 8$ matchings are good.

    \paragraph{Step 3: Grouping good matchings.} Now, we partition the good matchings in $\mathcal{M}'$ into $2^{|P|} \leq 2^x$ classes depending on which subset of the pivots they match. Since there are at least $\delta c^2 t / 8$ good matchings, at least $\frac{\delta c^2 t}{8 \cdot 2^x}$ of them must belong to the same class $C$. Let $Y = Y_M$ for some $M \in C$. Note also that $Y = Y_{M'}$ for any other matching $M' \in C$ as well since they all match the same subset of pivots. Moreover, we have $|Y| < (1-\delta c^2/2)n$ since $M$ is good, and all matchings in $C$ only match vertices in $Y$ as discussed earlier (as a consequence of \cref{obs-main}). This implies that the graph $H$ on vertex set $Y$ and edge-set $C$ (i.e., including all edges of all matchings in $C$) is an $\ORS_{n'}(cn, t')$ graph for $n' < (1-\delta c^2/2)n$ and $t' \geq \delta c^2 t / (8 \cdot 2^x)$ as desired.
\end{proof}

The next lemma follows by applying \cref{lem:balanced-degrees} after carefully pruning vertices of low degree in $\mc{M}'$. Intuitively, \cref{lem:two-cases} significantly reduces the number of vertices and shows that a relatively large number of matchings will have a lot of edges in the remaining graph.

\begin{lemma}\label{lem:two-cases}
    Let $G$ be an $\ORS_n(cn, t)$ graph for even $t$. Then for some $\kappa \geq c^3/160$, there exist $\ORS_{n'}(c'n', t')$ graphs such that $n' \leq n$, 
      $c' \geq (1+\kappa)c$ and $t' \geq t/2^{O(n/ct)}$.
\end{lemma}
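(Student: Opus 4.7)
The plan is a case analysis based on the set of low-degree vertices $L := \{v \in V : \deg_{\mc{M}'}(v) < \delta_0 ct\}$ with threshold $\delta_0 = c/40$. The split is on whether $|L|$ exceeds a secondary threshold $\tau$ (chosen roughly as $\min(c^3 n/80, (c/80)ct)$), with the intuition that if $|L|$ is large then removing $L$ alone boosts the density ratio $r/n$, while if $|L|$ is small then the restricted graph is balanced enough to invoke \cref{lem:balanced-degrees} directly.

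In Case A ($|L| \geq \tau$) the plan is to pass to $G[V \setminus L]$ on $n' = n - |L|$ vertices. The total $\mc{M}'$-edge mass incident to $L$ is at most $|L|\delta_0 ct$, so by averaging the edge loss per $\mc{M}'$-matching is at most $2\delta_0 c|L|$; Markov's inequality then guarantees that at least $t/4$ matchings in $\mc{M}'$ retain size $\geq cn - 8\delta_0 c|L|$. Truncating these to a common size $r$ yields an $\ORS_{n'}(r, t_A)$ graph with $t_A \geq t/4$ (inducedness is preserved under vertex deletion), and a direct computation gives $r/n' \geq c(1+\kappa)$ with $\kappa \geq |L|(1 - 8\delta_0)/(n - |L|) \geq c^3/160$ at the chosen $\tau$.

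In Case B ($|L| < \tau$) the plan is to apply \cref{lem:balanced-degrees} to $G[V \setminus L]$ with parameter $\delta_{BD} = c/80$. For $v \in V \setminus L$, the restricted $\mc{M}'$-degree drops by at most $|L|$ (since $v$'s $\mc{M}'$-partners across the different matchings are distinct, so at most $|L|$ of them lie in $L$), giving $\deg_{\mc{M}'_{G[V \setminus L]}}(v) \geq \delta_0 ct - |L| \geq \delta_{BD} ct$. By Markov, most matchings retain nearly all their edges after restriction, so after truncating to a common size $r \approx cn$ and discarding the few too-small matchings, the resulting $\ORS$ graph is a valid input to \cref{lem:balanced-degrees}; its conclusion then yields an $\ORS$ graph on $n'' \leq (1 - \Omega(c^3))n$ vertices with $r/n'' \geq c(1 + \delta_{BD} c^2/2) \geq c(1 + c^3/160)$ and $t' \geq t/2^{O(n/ct)}$.

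The main technical obstacle lies in the Case B bookkeeping: after discarding small matchings, the surviving matchings from $\mc{M}'$ must still form the later half of the ordered sequence (so that the ``$\mc{M}'$'' as invoked inside \cref{lem:balanced-degrees} is contained in the original $\mc{M}'$), and the degree hypothesis must survive this discard step. Keeping only good matchings in balanced proportion from each of $\mc{M}$ and $\mc{M}'$ (a constant fraction of each) and using the slack $\delta_0 - \delta_{BD} = c/80$ to absorb any additional degree drop resolves both issues. Balancing constants so that both cases separately yield $\kappa \geq c^3/160$ as well as $t' \geq t/2^{O(n/ct)}$ in the nontrivial regime $n/(ct) = O(1)$ is the main calculation.
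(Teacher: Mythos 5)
Your proposal mirrors the paper's high-level strategy (split into two cases based on low-degree vertices in $\mc{M}'$, then either exploit the density boost directly or invoke \cref{lem:balanced-degrees}), but it replaces the paper's \emph{iterative} low-degree pruning with a \emph{one-shot} removal of $L$, and that substitution opens a genuine gap.

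The paper repeatedly removes any vertex whose current degree in the remaining $\mc{M}'$-graph is below $2\delta ct$, allowing cascades, and splits on the total fraction $\alpha$ of vertices removed by this process. When the process terminates, the degree hypothesis of \cref{lem:balanced-degrees} holds \emph{automatically} in the residual graph, regardless of how many vertices had low degree initially. Your one-shot removal instead gives the surviving vertices a degree bound of only $\delta_0 ct - |L|$ (each vertex can lose up to $|L|$ of its $\mc{M}'$-partners), so Case B requires $|L| \leq (\delta_0 - \delta_{BD})ct$. Meanwhile, Case A needs $|L| = \Omega(c^3 n)$ to certify $\kappa \geq c^3/160$. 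These two thresholds only meet when $(\delta_0 - \delta_{BD})ct = \Omega(c^3 n)$, i.e. $t = \Omega(c^2 n / (\delta_0 - \delta_{BD}))$. When $t$ is smaller than this (which is perfectly consistent with $\ORS_n(cn,t)$ existing), there is a band of values of $|L|$ covered by neither case. You acknowledge this by restricting to "the nontrivial regime $n/(ct) = O(1)$", but you do not actually dispatch the other regime, and it is not vacuous: $t$ can be polynomially large while $2^{\Theta(n/ct)}$ is also only polynomially large, so $t' \geq t/2^{O(n/ct)}$ is a real constraint, not a triviality. The $\min$ in your choice of $\tau$ does not fix this; it just moves the location of the uncovered band. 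The iterative removal is what makes the constants unconditional.

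Two smaller points. First, the discard-and-rebalance machinery in your Case B is unnecessary: since each vertex of $L$ touches at most one edge of each matching, \emph{every} matching loses at most $|L|$ edges, so when $|L|$ is small all $t$ matchings survive with size at least $cn - |L|$, and no matching needs to be dropped (this is exactly why the paper keeps $t'' = t$ in its Case 2 and never has to reason about which half of the ordering the surviving matchings lie in). Second, your Case A and the paper's Case 1 are essentially the same argument (discard $\mc{M}$, keep a constant fraction of $\mc{M}'$, boost density via vertex reduction), and that part is correct.
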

\begin{proof}
    Let $\mc{M}$ and $\mc{M}'$ be defined for $G$ as in \cref{lem:balanced-degrees}.  We would like to apply \cref{lem:balanced-degrees} iteratively to prove \cref{lem:two-cases}. However, \cref{lem:balanced-degrees} requires a lower bound on the degrees in $\mc{M}'$ which might not necessarily hold for our graph $G$. Therefore to be able to use it, we first have to prune vertices of small degrees.
    
    Let $G'_0$ be the subgraph of $G$ only including the edges of the matchings in $\mc{M}'$. Let $\delta = 1/4$. We iteratively remove vertices of low degree. Formally, in step $i$, if there is a vertex $v_i$ with degree less than $2\delta c t$ in $G'_{i-1}$, we define the graph  $G'_i$ of the next step to be graph obtained after removing $v_i$ and its edges from $G'_{i-1}$. Let $G'_k$ be the final graph that does not have any vertex of degree smaller than $2\delta c t$. 

Suppose that $k = \alpha n$; that is, we remove $\alpha n$ vertices in the process above. We consider two cases depending on the value of $\alpha$.

\paragraph{Case 1: $\alpha \geq \delta c^3/10$.} Since we remove at most $\delta c t$ edges in every step, the total number of removed edges is at most $\delta \alpha  c t n$ over the $k=\alpha n$ steps of constructing $G'_k$. Say a matching $M \in \mc{M}'$ is {\em damaged} if a total of at least $3 \delta \alpha cn$ of its edges have been removed. In other words, $M$ is damaged if less than $|M| - 3\delta \alpha c n = (1-3\delta \alpha)cn$ of its edges belong to $G'_k$. Since the total number of edges removed is at most $\delta \alpha c t n$ and each damaged matching has at least $3\delta \alpha c n$ of its edges removed, the total number of damaged matchings can be upper bounded by $\delta \alpha c t n / 3\delta \alpha c n = t/3$. Since $|\mc{M}'| = t/2$, at least $t/6$ matchings in $\mc{M}'$ are not damaged, and have size at least $(1-3\delta \alpha)c n$. These matchings themselves form an $\ORS_{n'}(c' n', t/6)$ graph with parameters
$$
n' = (1-\alpha)n \leq (1-\delta c^3/10)n \stackrel{(\delta=1/4)}{=} (1- c^3/40)n,
$$
and
$$
c' n' \geq (1-3\delta \alpha)cn = (1-3\delta \alpha) c \frac{n'}{1-\alpha} \stackrel{(\delta=1/4)}{=} \left(1-\frac{3}{4}\alpha \right) c \frac{n'}{1-\alpha} \geq \left(1 + \frac{1}{4}\alpha \right) c n' \geq (1+c^3/160)cn',
$$
which implies that $c' \geq (1+c^3/160)c$. Taking $\kappa = c^3/160$ proves the lemma in this case. 


\paragraph{Case 2 -- $\alpha < \delta c^3/10$:} Let $G_k$ be the graph $G$ after removing vertices $v_1, \ldots, v_k$ (the difference between $G_k$ and $G'_k$ is that $G'_k$ only includes the edges of $\mc{M}'$ but $G_k$ includes both edges of $\mc{M}'$ and $\mc{M}$ that do not have any endpoint removed). Since we remove a total of $\alpha n$ vertices from $G$ to obtain $G_k$, each matching in $\mc{M}'$ and $\mc{M}$ loses at most $\alpha n$ edges. Therefore, $G_k$ is an $\ORS_{n''}(c''n'', t'')$ graph with parameters 
\begin{equation}\label{eq:hc82s37d}
n'' = (1-\alpha)n, \qquad
c'' \geq \frac{cn - \alpha n}{n''} = \frac{(c-\alpha)n}{(1-\alpha)n} \geq c - \alpha \geq (1-\delta c^2/10)c, \qquad
t''=t.
\end{equation}
Additionally, by the construction of $G'_k$, all vertices $v$ in $G_k$ satisfy 
\begin{flalign*}
    \deg_{\mc{M}'}(v) \geq 2 \delta ct \geq \delta c'' t'' \tag{Since $t'' = t$ and $c'' \leq \frac{cn}{n''} = \frac{cn}{(1-\alpha)n} \leq 2c$ where the last inequality follows from $\alpha < 1/2$.}
\end{flalign*}
This now satisfies the requirements of \cref{lem:balanced-degrees}. Applying it on graph $G_k$, we obtain an $\ORS_{n'}(c'n', t')$ graph where the number of vertices satisfies 
\begin{flalign}\label{eq:n-haoeunh}
    n' &< (1-\delta c''^2/2)n'',
\end{flalign}
the number of matchings satisfies
$$
 t' \geq \delta c''^2 t'' / (8 \cdot 2^{4n''/c''t''}) = \delta c \cdot t / 2^{O(n/c t)} = t/2^{O(n/ct)},
$$
and finally since \cref{lem:balanced-degrees} does not change the size of matchings, we get
\begin{flalign*}
 c' n' &= c'' n''\\
 &\geq \frac{c''n'}{(1-\delta c''^2/2)} \tag{By \cref{eq:n-haoeunh}.}\\
 &\geq (1+\delta c''^2/2)c'' n'\\
 &\geq \Big(1+\delta \Big((1-\delta c^2/10)c\Big)^2/2\Big)\Big((1-\delta c^2/10)c \Big) n' \tag{Since the RHS is minimized when $c''$ is minimized, thus we can replace $c''$ with its LB from \cref{eq:hc82s37d}.},\\
 &\geq \Big(1+0.4 \delta c^2\Big)\Big((1-\delta c^2/10)c \Big) n' \tag{Since $\delta\big((1-\delta c^2/10)c\big)^2/2 \geq \delta (0.9c)^2/2 > 0.4\delta c^2$}\\
 &> (1+0.07 c^2)cn'.
\end{flalign*}
Dividing both sides of the inequality by $n'$ implies $c' \geq (1+0.07c^2)c$. Taking $\kappa = 0.07c^2 \geq c^3/160$ implies the lemma.
\end{proof}

We are now ready to prove \cref{thm:ORS-ub}.

\begin{proof}[Proof of \cref{thm:ORS-ub}]
We prove that for some $d = \poly(1/c)$, there does not exist any  $\ORS_n(cn, t)$ graph $G$ with $t \geq n/\log_b^{(d)}n$ where $b = 2^{\beta/c}$ for some large enough constant $\beta \geq 1$. Note that the base of the iterated log is different from the statement of the theorem, but since $\log^{(x)}_b z = \log^{(\Theta(x \log^* b))} z$, this implies the theorem as well by letting $\ell = \Theta(d \cdot \log^* b) = \poly(1/c)$.

Suppose for the sake of contradiction that that there exists an $\ORS_n(cn, t)$ graph $G$ with $t \geq n/\log_b^{(d)}n$. Let $\kappa$ be as in \cref{lem:two-cases}. We iteratively apply \cref{lem:two-cases} for $k = \log_2(1/c)/\kappa = \poly(c)$ steps to obtain a sequence of graphs $G_0, G_1, G_2, G_3, \ldots, G_k$ where $G_0 = G$ is the original graph, $G_1$ is obtained by applying \cref{lem:two-cases} on $G_0$, $G_2$ is obtained by applying \cref{lem:two-cases} on $G_1$, and so on so forth.

Let us now analyze the ORS properties of graph $G_k$, starting with the parameter $c_k$. Since every application of \cref{lem:two-cases} multiplies the parameter $c_i$ by a factor of at least $(1+\kappa)$, we have
\begin{flalign*}
c_k &\geq (1+\kappa)^{k} c\\
&= (1+\kappa)^{\log_2(1/c)/k} c\\
&\geq 2^{\log_2(1/c)}c \tag{Since $(1+\kappa)^{1/\kappa} \geq 2$ for all $0 < \kappa \leq 1$}\\
&\geq 1.
\end{flalign*}
This implies that in graph $G_k$, there must be $t_k$ edge-disjoint matchings of size $c_k n_k \geq n_k$, but each matching in a graph on $n_k$ vertices can have size at most $n_k/2$. In other  words, we must have $t_k = 0$. We will obtain a contradiction by proving that $t_k \geq 1$. 

We prove by induction that for every $i \in [d]$, 
$$
t_i \geq \frac{n}{\log^{(d-i)}_b n}.
$$ 

For the base case $i=0$ this holds as assumed at the beginning of the proof. By choosing appropriately large $\beta$, we know by \cref{lem:two-cases} that,
\begin{flalign*}
    t_i &\geq t_{i-1}/2^{0.5\beta(n_{i-1}/c_{i-1} t_{i-1})}\\
    &\geq t_{i-1}/2^{0.5\beta(n/c t_{i-1})} \tag{Since $n_{i-1} \leq n$ and $c_{i-1} \geq c$ as we iteratively apply \cref{lem:two-cases}.}\\
    &= t_{i-1}/b^{0.5(n/ t_{i-1})} \tag{Since we defined $b = 2^{\beta/c}$.}\\
    &\geq \frac{n}{\log^{(d-i+1)}_b n} \cdot b^{-0.5\left(n/\frac{n}{\log^{(d-i+1)}_b n}\right)} \tag{By the induction hypothesis $t_{i-1} \geq \frac{n}{\log^{(d-i+1)}_b n}$.}\\
    &= \frac{n}{\log^{(d-i+1)}_b n} \cdot b^{-0.5\log^{(d-i+1)}_b n}\\
    &= \frac{n}{\log^{(d-i+1)}_b n \cdot \sqrt{ \log^{(d-i)}_b n}}\\
    &\geq \frac{n}{\log^{(d-i)}_b n},
\end{flalign*}
concluding the proof.

Now let $d > k = \poly(1/c)$. From the above, we get that $t_k \geq n/(\log_b^{(d-k)}) \geq 1$, which as discussed is a contradiction.
\end{proof}

\subsection{When Matchings are (Very) Large}\label{sec:large-matchings}

In this section, we prove that the number of matching of an ORS graph with $r > n/4$ is bounded by a constant. More precisely we prove \cref{thrm-const}. The same upper bound was already known for RS graphs \cite{fox2015graphs}, and we show that nearly the same proof carries over to ORS graphs as well.

\begin{theorem} \label{thrm-const}
    For any $c > 1/4+\epsilon$, $\ORS_n(cn) \leq 1/\epsilon + 1$.
\end{theorem}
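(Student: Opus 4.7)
The plan is to use a simple double-counting / second-moment argument on the intersection pattern of the matchings. Let $G$ be an $\ORS_n(cn, t)$ graph with ordered matchings $M_1, \ldots, M_t$, each of size exactly $cn$. The structural property I would exploit is: for every $i < j$, the inducedness of $M_j$ inside $M_1 \cup \cdots \cup M_j$ combined with edge-disjointness forbids any edge of $M_i$ from having both endpoints in $V(M_j)$ (such an edge would lie in the induced subgraph on $V(M_j)$ yet would not be in $M_j$, contradicting \cref{def:ORS}). Each of the $|M_i| = cn$ edges of $M_i$ therefore contributes at most one endpoint to $V(M_j)$, giving
$$|V(M_i) \cap V(M_j)| \;\leq\; |M_i| \;=\; cn \qquad \text{for all } i < j.$$
Summing over all $\binom{t}{2}$ such pairs yields $\sum_{i < j} |V(M_i) \cap V(M_j)| \leq cn\binom{t}{2}$.

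Next I would double-count the same sum by vertices. For each $v \in V$ let $d_v = |\{i : v \in V(M_i)\}|$. Swapping order of summation gives the exact identity $\sum_{i < j} |V(M_i) \cap V(M_j)| = \sum_{v \in V} \binom{d_v}{2}$. The first-moment identity $\sum_v d_v = \sum_i 2|M_i| = 2tcn$ implies the average value $\bar d = 2tc$, and since $\binom{x}{2}$ is convex, Jensen's inequality gives $\sum_v \binom{d_v}{2} \geq n\binom{2tc}{2}$ (equivalently, one can appeal to Cauchy--Schwarz on $\sum_v d_v^2$).

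Combining the two bounds, $n\binom{2tc}{2} \leq cn\binom{t}{2}$. Dividing by $cnt/2$ (for $t,c > 0$) reduces this to $2(2tc - 1) \leq t - 1$, i.e., $t(4c-1) \leq 1$, and hence
$$t \;\leq\; \frac{1}{4c-1}.$$
Under the hypothesis $c > 1/4 + \epsilon$ this gives $t < \tfrac{1}{4\epsilon} \leq \tfrac{1}{\epsilon}+1$, proving \cref{thrm-const}.

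The only subtle point is the asymmetry of the ORS condition compared to the RS condition: we only have the intersection bound $|V(M_i) \cap V(M_j)| \leq |M_i|$ for $i < j$ (not the symmetric version $\leq |M_j|$), since $M_j$ is required to be induced only with respect to prior matchings. Fortunately, the one-sided bound is all the double-counting identity needs, so this does not pose an obstacle and essentially the same proof as for RS graphs carries through.
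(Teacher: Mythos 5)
Your proposal is correct and takes essentially the same approach as the paper. You prove the same key structural lemma (for $i < j$, every edge of $M_i$ has at most one endpoint in $V(M_j)$, hence $|V(M_i) \cap V(M_j)| \leq cn$), using the same inducedness-plus-edge-disjointness reasoning that the paper phrases via pigeonhole; the paper then defers the remaining counting step to Fox et al., whereas you spell it out explicitly with the standard double-counting identity $\sum_{i<j}|V(M_i)\cap V(M_j)| = \sum_v \binom{d_v}{2}$ and Jensen's inequality, arriving at the even slightly sharper bound $t \leq 1/(4c-1) < 1/(4\epsilon)$, which of course implies $t \leq 1/\epsilon + 1$.
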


\begin{proof}
Suppose $G$ is an $\ORS(r,t)$ graph on $n$ vertices and let $M_1, \ldots, M_t$ be the corresponding ordered list of matchings as in \cref{def:ORS}. We show that for any $i \not= j$ in $[t]$, we have $|V(M_i) \cap V(M_j)| \leq r$. The rest of the proof follows exactly as in the upper bound of \cite[Section~2]{fox2015graphs} for RS, which we omit here.

Let us assume w.l.o.g. that $i<j$ and suppose that $|V(M_i) \cap V(M_j)| \geq r+1$. By the pigeonhole principle, this means that there must be an edge in $M_i$ whose both endpoints are matched in $M_j$, contradicting the fact from \cref{def:ORS} that $M_j$ is an induced matching in $M_1 \cup \ldots \cup M_j$. Hence, $|V(M_i) \cap V(M_j)| \leq r$ and the proof is complete.
\end{proof}

\section*{Acknowledgements}

We thank 
Sepehr Assadi, 
Sanjeev Khanna,
Huan Li,
Ray Li,
Mohammad Roghani, and
Aviad Rubinstein for helpful discussions about RS graphs and their applications in dynamic graphs over the years. Additionally, we thank Sepehr Assadi, Jiale Chen, and anonymous FOCS'24 reviewers for helpful suggestions on improving the exposition of the paper.


\bibliographystyle{plainnat}

\bibliography{references}
	
\end{document}